\newtheorem{theorem}{Theorem}[section]
\newtheorem{proposition}[theorem]{Proposition}
\newtheorem{corollary}[theorem]{Corollary}
\newtheorem{Definition}[theorem]{Definition}
\newtheorem{Example}[theorem]{Example}
\newtheorem{Remark}[theorem]{Remark}
\newenvironment{definition}{\begin{Definition}\begin{em}}{\end{em}\end{Definition}}
\newenvironment{example}{\begin{Example}\begin{em}}{\end{em}\end{Example}}
\newenvironment{remark}{\begin{Remark}\begin{em}}{\end{em}\end{Remark}}
\newproof{proof}{Proof}
\def\eqref#1{(\ref{#1})}
\def\tuple#1{\langle#1\rangle}
\newcommand{\E}{\exists}
\newcommand{\myend}{\mbox{}\hfill$\Box$}
\newcommand{\comment}[1]{}
\newcommand{\fand}{\varotimes}
\newcommand{\fto}{\Rightarrow}
\newcommand{\fequiv}{\Leftrightarrow}
\newcommand{\FTSs}{FTSs\xspace}
\newcommand{\NFTS}{NFTS\xspace}
\newcommand{\NFTSs}{NFTSs\xspace}
\newcommand{\NFLTS}{NFLTS\xspace}
\newcommand{\NFLTSs}{NFLTSs\xspace}
\newcommand{\FLG}{FLG\xspace}
\newcommand{\FLGs}{FLGs\xspace}
\newcommand{\FLTS}{FLTS\xspace}
\newcommand{\FLTSs}{FLTSs\xspace}
\newcommand{\fALC}{$\mathit{f}\mathcal{ALC}$\xspace}
\newcommand{\mS}{\mathcal{S}}
\newcommand{\mF}{\mathcal{F}}
\newcommand{\SA}{A}
\newcommand{\sS}{\Sigma}
\newcommand{\SV}{\Sigma_V}
\newcommand{\SE}{\Sigma_E}
\newcommand{\support}{\mathit{support}}
\newcommand{\size}{\mathit{size}}
\newcommand{\elements}{\mathit{elements}}
\newcommand{\degree}{\mathit{degree}}
\newcommand{\subblocks}{\mathit{subblocks}}
\newcommand{\anyElem}{\mathit{anyElement}}
\newcommand{\allElems}{\mathit{allElements}}
\newcommand{\CompCBt}{\mbox{$\mathsf{ComputeBisimulationEfficiently}$}\xspace}
\newcommand{\CompCPfNFTS}{\mbox{$\mathsf{ComputeCrispPartitionNFTS}$}\xspace}
\newcommand{\CompFPt}{\mbox{$\mathsf{ComputeFuzzyPartitionEfficiently}$}\xspace}
\newcommand{\CompFPfNFTS}{\mbox{$\mathsf{ComputeFuzzyPartitionNFTS}$}\xspace}
\newcommand{\bbP}{\mathbb{P}}
\newcommand{\bbB}{\mathbb{B}}
\newcommand{\rs}{\mathit{result}}
\newcommand{\deltatt}{\delta_{_\circ}}
\newcommand{\InitComplexity}{$O(|S| + \size(\delta))$}
\journal{arXiv}
\begin{document}
\sloppy
	
\begin{frontmatter}		
		
\title{Efficient algorithms for computing bisimulations for nondeterministic fuzzy transition systems}

\author{Linh Anh Nguyen}
\ead{nguyen@mimuw.edu.pl}
				
\address{Institute of Informatics, University of Warsaw, Banacha 2, 02-097 Warsaw, Poland}
%\address{Faculty of Information Technology, Nguyen Tat Thanh University, Ho Chi Minh City, Vietnam}

\begin{abstract}
Fuzzy transition systems offer a robust framework for modeling and analyzing systems with inherent uncertainties and imprecision, which are prevalent in real-world scenarios. As their extension, nondeterministic fuzzy transition systems (\NFTSs) have been studied in a considerable number of works. Wu {\em et al.} (2018) provided an algorithm for computing the greatest crisp bisimulation of a finite \NFTS $\mS = \tuple{S, \SA, \delta}$, with a time complexity of order $O(|S|^4 \cdot |\delta|^2)$ under the assumption that $|\delta| \geq |S|$. Qiao {\em et al.} (2023) provided an algorithm for computing the greatest fuzzy bisimulation of a finite \NFTS $\mS$ under the G\"odel semantics, with a time complexity of order $O(|S|^4 \cdot |\delta|^2 \cdot l)$ under the assumption that $|\delta| \geq |S|$, where $l$ is the number of fuzzy values used in~$\mS$ plus~1. 
In this work, we provide efficient algorithms for computing the partition corresponding to the greatest crisp bisimulation of a finite \NFTS $\mS$, as well as the compact fuzzy partition corresponding to the greatest fuzzy bisimulation of $\mS$ under the G\"odel semantics. Their time complexities are of the order $O((\size(\delta) \log{l} + |S|) \log{(|S| + |\delta|)})$, where $l$ is the number of fuzzy values used in~$\mS$ plus~2. When $|\delta| \geq |S|$, this order is within $O(|S| \cdot |\delta| \cdot \log^2{|\delta|})$. The reduction of time complexity from $O(|S|^4 \cdot |\delta|^2)$ and $O(|S|^4 \cdot |\delta|^2 \cdot l)$ to $O(|S| \cdot |\delta| \cdot \log^2{|\delta|})$ is a significant contribution of this work. 
In addition, we introduce nondeterministic fuzzy labeled transition systems, which extend \NFTSs with fuzzy state labels, and we define and provide results on simulations and bisimulations between them.
\end{abstract}

\begin{keyword}
Fuzzy transition systems \sep Bisimulation \sep Simulation
\end{keyword}

\end{frontmatter}

%===============================================================================

\section{Introduction}
\label{section:intro}

Fuzzy transition systems (\FTSs) offer a robust framework for modeling and analyzing systems with inherent uncertainties and imprecision, which are prevalent in real-world scenarios. They extend traditional transition systems by incorporating fuzzy transitions, which enable more nuanced state changes. 
In~\cite{CaoCK11} Cao {\em et al.} introduced and studied (crisp) bisimulations between \FTSs. 
In~\cite{DBLP:journals/kbs/IgnjatovicCS13} Ignjatovi\'c {\em et al.} studied subsystems of \FTSs via fuzzy relation inequalities and equations. 
In~\cite{DBLP:journals/ijar/PanC0C14} Pan {\em et al.} introduced and studied fuzzy simulations for 
fuzzy labeled transition systems (\FLTSs), which extend \FTSs with fuzzy state labels. 
In~\cite{DBLP:journals/ijar/PanLC15} Pan {\em et al.} introduced and studied fuzzy/crisp simulations for quantitative transition systems, which are variants of \FLTSs. 
In~\cite{fss/WuD16} Wu {\em et al.} provided logical characterizations of (crisp) simulations and bisimulations for \FTSs. 

In~\cite{CaoSWC13} Cao {\em et al.} studied nondeterministic fuzzy transition systems (\NFTSs), which are a generalization of \FTSs, stating that ``nondeterminism is essential for modeling scheduling freedom, implementation freedom, the external environment, and incomplete information''. They introduced and studied the behavioral distance between states of a finite \NFTS, which measures the dissimilarity between the states. They also defined (crisp) bisimulations of an \NFTS and proved that two states are bisimilar (i.e., form a pair belonging to the greatest bisimulation) iff the behavioral distance between them is~0. 

%---------------------------------------------------------------------

Bisimulations are robust formal notions for examining the equivalence or similarity between states. 
Two important works on bisimulations for \NFTSs are \cite{DBLP:journals/fss/WuCBD18,DBLP:journals/tfs/QiaoZF23}. 
In~\cite{DBLP:journals/fss/WuCBD18} Wu {\em et al.} provided algorithmic and logical
 characterizations of (crisp) bisimulations for \NFTSs. 
They gave an algorithm for checking whether two states of a finite \NFTS $\mS = \tuple{S, \SA, \delta}$ are bisimilar. (Here, $S$, $A$ and $\delta$ are the set of states, the set of actions and the transition relation of $\mS$, respectively.) The algorithm runs in time of the order $O(|S|^4 \cdot |\delta|^2)$, under the assumption that $|\delta| \geq |S|$.\footnote{Proposition~4.3 of~\cite{DBLP:journals/fss/WuCBD18} and its proof should be made precise by adding the assumption that \mbox{$|\!\!\to\!\!| \geq |S|$}, which means $|\delta| \geq |S|$.} 
In~\cite{DBLP:journals/tfs/QiaoZF23} Qiao {\em et al.} introduced and studied fuzzy bisimulations for \NFTSs. They gave fixed-point and logical characterizations of such bisimulations. They also provided an algorithm for computing the greatest fuzzy bisimulation of a finite \NFTS $\mS$ when the used operator $\fand$ is the G\"odel or \L{}ukasiewicz t-norm. The complexity analysis given in \cite{DBLP:journals/tfs/QiaoZF23} states that, when $\fand$ is the G\"odel t-norm, the algorithm runs in time of the order \mbox{$O(|S|^6 \cdot |\!\!\to\!\!|^2 \cdot |A| \cdot l)$}, where \mbox{$|\!\!\to\!\!|$} is the maximum number of transitions outgoing from a state and $l$ is the number of fuzzy values used in~$\mS$ plus~1. A tighter analysis of the complexity of that algorithm would give $O(|S|^4 \cdot |\delta|^2 \cdot l)$, under the assumption that $|\delta| \geq |S|$. 

Other notable works on bisimulations for \NFTSs concern 
distribution-based behavioral distance for \NFTSs~\cite{tfs/WuD18}, 
group-by-group fuzzy\footnote{In contrast to the name, group-by-group fuzzy bisimulations defined in \cite{DBLP:journals/ijar/WuCHC18} are crisp relations.} bisimulations for \NFTSs~\cite{DBLP:journals/ijar/WuCHC18}, 
approximate bisimulations for \NFTSs~\cite{fss/QiaoZP23}, 
distribution-based limited fuzzy bisimulations for \NFTSs~\cite{jfi/QiaoF024}, 
as well as modeling and specification of nondeterministic fuzzy discrete-event systems~\cite{DBLP:books/sp/20/CaoECP20}. 

%---------------------------------------------------------------------

The main aim of this work is to develop efficient algorithms for computing the greatest crisp/fuzzy bisimulation of a finite \NFTS. 
We are motivated to design algorithms with a complexity order much lower than the ones of the algorithms provided in \cite{DBLP:journals/fss/WuCBD18,DBLP:journals/tfs/QiaoZF23}. Apart from these works, which have been discussed above, other closely related works are 
\cite{conf/tase/BuWC17, tcs/Chen0C18}. 
In~\cite{conf/tase/BuWC17} Bu {\em et al.} provided an algorithm with the time complexity order $O(|S|^5 \cdot |\delta|^3 \cdot \log{|\delta|})$ for computing the behavioral distance between states of a finite \NFTS. 
In~\cite{tcs/Chen0C18} Chen {\em et al.} provided polynomial time algorithms for computing the behavioral distance between states of a finite \NFTS (also for the case with discounting), without giving a concrete complexity order. 
As stated before, the behavioral distance between states is closely related to bisimulations for \NFTSs~\cite{CaoSWC13}.\footnote{We have the conjecture that the behavioral distance $d_f$~\cite{CaoSWC13} is the complement of a fuzzy relation between the crisp bisimilarity $Z_c$ and the fuzzy bisimilarity $Z_f$ w.r.t.\ the G\"odel semantics. That is, $Z_c(s,t) \leq 1 - d_f(s,t) \leq Z_f(s,t)$ for all states $s$ and $t$ of a given \NFTS $\mS$, where $Z_c$ (resp.\ $Z_f$) is the greatest crisp bisimulation (resp.\ fuzzy bisimulation w.r.t.\ the G\"odel semantics) of~$\mS$.} 

%---------------------------------------------------------------------

In this work, we provide efficient algorithms for computing the partition corresponding to the greatest crisp bisimulation of a finite \NFTS $\mS = \tuple{S, \SA, \delta}$, as well as the compact fuzzy partition corresponding to the greatest fuzzy bisimulation of $\mS$ when $\fand$ is the G\"odel t-norm. Their time complexities are of the order $O((\size(\delta) \log{l} + |S|) \log{(|S| + |\delta|)})$, where $\size(\delta)$ is the amount of data used to specify the transition relation $\delta$ and $l$ is the number of fuzzy values used in~$\mS$ plus~2. When $|\delta| \geq |S|$, this order is within $O(|S| \cdot |\delta| \cdot \log^2{|\delta|})$. 

%---------------------------------------------------------------------

The reduction of time complexity from $O(|S|^4 \cdot |\delta|^2)$~\cite{DBLP:journals/fss/WuCBD18} and $O(|S|^4 \cdot |\delta|^2 \cdot l)$~\cite{DBLP:journals/tfs/QiaoZF23} to $O(|S| \cdot |\delta| \cdot \log^2{|\delta|})$ is a significant contribution of this work. 
Regarding the case $|\delta| \geq |S|$ and taking $10^9$ as the limit for the number of steps an algorithm can execute using a laptop, the algorithms given in \cite{DBLP:journals/fss/WuCBD18,DBLP:journals/tfs/QiaoZF23} cannot deal with \NFTSs having 32 states or more, while our algorithms can deal with \NFTSs having about 2765 states.\footnote{We have $n^6 > 10^9$ for $n \geq 32$, and $n^2 \log^2{n} < 10^9$ for $n \leq 2765$. For simplicity, we ignore the constant factors hidden in the $O(\cdot)$ notation.} 
More realistically, since our algorithms have the time complexity of the order $O((\size(\delta) \log{l} + |S|) \log{(|S| + |\delta|)})$, they execute more than $10^9$ steps only when $\size(\delta)$ is really too big.

%---------------------------------------------------------------------

As a further contribution, we introduce nondeterministic fuzzy labeled transition systems (\NFLTSs), which extend \NFTSs with fuzzy state labels, and we define and provide results on simulations and bisimulations between them. In particular, our above mentioned algorithms are still correct when taking a finite \NFLTS as the input instead of a finite \NFTS. Furthermore, we present efficient algorithms for computing the greatest crisp (resp.\ fuzzy) simulation between two finite \NFLTSs $\mS$ and $\mS'$ (under the G\"odel semantics in the case of fuzzy simulation). 
Their time complexities are of the order $O((m+n)n)$, where $m = \size(\delta) + \size(\delta')$ and $n = |S| + |S'| + |\delta| + |\delta'|$, with $S$ and $\delta$ (resp.\ $S'$ and $\delta'$) being the set of states and the transition relation of $\mS$ (resp.\ $\mS'$). 

%---------------------------------------------------------------------

The rest of this work is structured as follows. In Section~\ref{section: prel}, we recall the definitions of fuzzy sets and relations, the compact fuzzy partition corresponding to a fuzzy equivalence relation~\cite{DBLP:journals/isci/Nguyen23}, the formal notions of crisp/fuzzy bisimulations for \NFTSs \cite{DBLP:journals/fss/WuCBD18,DBLP:journals/tfs/QiaoZF23}, the definition of fuzzy labeled graphs (\FLG{}s), and the notions of crisp/fuzzy bisimulations for \FLG{}s \cite{DBLP:journals/isci/Nguyen23,DBLP:journals/ijar/NguyenT24}. In Section~\ref{section: transformation}, we present a transformation of an \NFTS to an \FLG. 
By using that transformation, in Section~\ref{section: computation}, we present our algorithms for computing the greatest crisp/fuzzy bisimulation of a finite \NFTS. In Section~\ref{section: extension}, we present our results on \NFLTSs. Section~\ref{section: conc} contains conclusions.

%===============================================================================

\section{Preliminaries}
\label{section: prel}

By $\land$ and $\lor$ we denote the functions $\min$ and $\max$ on the unit interval $[0,1]$. 
For $\Gamma \subseteq [0,1]$, by $\bigwedge\! \Gamma$ and $\bigvee\! \Gamma$ we denote the infimum and supremum of $\Gamma$, respectively. 
If not stated otherwise, let $\fand$ denote any left-continuous t-norm and $\fto$ the corresponding residuum (see, e.g., \cite{Hajek1998,Belohlavek2002}). 
Let $\fequiv$ be the binary operator on $[0,1]$ defined by \mbox{$(x \fequiv y) = (x \fto y) \land (y \fto x)$}. The G\"odel t-norm~$\fand$ is the same as $\land$, which is continuous, and its corresponding residuum is defined by: $(x \fto y) = 1$ if $x \leq y$, and $(x \fto y) = y$ otherwise. 

Given a set $X$, a {\em fuzzy subset} of $X$ is any function from $X$ to $[0,1]$. It is also called a {\em fuzzy set}. By $\mF(X)$ we denote the set of all fuzzy subsets of~$X$. For $\mu \in \mF(X)$ and $U \subseteq X$, we denote $\support(\mu) = \{x \in X \mid \mu(x) > 0\}$ and $\mu(U) = \bigvee_{x \in U} \mu(x)$. 
Given $\mu, \nu \in \mF(X)$, we say that $\mu$ is {\em greater than or equal to} $\nu$, denoted by $\nu \leq \mu$, if $\nu(x) \leq \mu(x)$ for all $x \in X$. 

For $\{a_i\}_{i \in I} \subseteq [0,1]$, we write $\{x_i\!:\!a_i\}_{i \in I}$ or $\{x_1\!:\!a_1$, \ldots, $x_n\!:\!a_n\}$ when $I = 1..n$ to denote the fuzzy set $\mu$ specified by: $\support(\mu) \subseteq \{x_i\}_{i \in I}$ and $\mu(x_i) = a_i$ for $i \in I$. 

A fuzzy subset of $X \times Y$ is called a {\em fuzzy relation} between $X$ and $Y$. Given fuzzy relations $r \in \mF(X \times Y)$ and $s \in \mF(Y \times Z)$, the {\em converse} of $r$ is $r^{-1} \in \mF(Y \times X)$ specified by $r^{-1}(y,x) = r(x,y)$, for $x \in X$ and $y \in Y$, and the {\em composition} of~$r$ and~$s$ (w.r.t.~$\fand$) is $(r \circ s) \in \mF(X \times Z)$ specified by $(r \circ s)(x,z) = \bigvee_{y \in Y} r(x,y) \fand s(y,z)$, for $x \in X$ and $z \in Z$. 
A fuzzy relation $r \in \mF(X \times X)$ is called a fuzzy relation on~$X$. It is a {\em fuzzy equivalence relation} on $X$ (w.r.t.~$\fand$) if it is {\em reflexive} (i.e., $r(x,x) = 1$ for all $x \in X$), {\em symmetric} (i.e., $r = r^{-1}$) and {\em transitive} (i.e., $r \circ r \leq r$). 

\subsection{Compact fuzzy partitions}

The (traditional) fuzzy partition corresponding to a fuzzy equivalence relation $r$ on $X$ is usually defined to be the set $\{\mu \in \mF(X) \mid$ there exists $x \in X$ such that $\mu(y) = r(x,y)$ for all $y \in X\}$ \cite{OVCHINNIKOV1991107,DBLP:conf/ismvl/Schmechel95,DBLP:journals/isci/BaetsCK98,DBLP:journals/fss/CiricIB07}. In~\cite{DBLP:journals/isci/Nguyen23} we introduced a new notion of the fuzzy partition that corresponds to a fuzzy equivalence relation on a finite set for the case where $\fand$ is the G\"odel t-norm. We recall it below, extending its name with the
 word ``compact''. 

\begin{definition}\label{def: HFJAA}
Consider the case where $\fand$ is the G\"odel t-norm. 
Given a finite set $X$ and a fuzzy equivalence relation $r \in \mF(X \times X)$, the {\em compact fuzzy partition corresponding to $r$} is the data structure $B$ defined inductively as follows:
\begin{itemize}
\item if $r(x,x') = 1$ for all $x,x' \in X$, then $B$ has two attributes, $B.\degree = 1$ and $B.\elements = X$, $B$ is also called a {\em crisp block} and denoted by $X_1$;
\item else:
    \begin{itemize}
	\item let $d = \bigwedge_{x,x' \in X} r(x,x')$;
	\item let $\sim$ be the equivalence relation on $X$ such that $x \sim x'$ iff $r(x,x') > d$;
	\item let $\{Y_1,\ldots,Y_n\}$ be the (crisp) partition of $X$ corresponding to~$\sim$;
	\item let $r_i$ be the restriction of $r$ to $Y_i \times Y_i$ and $B_i$ the compact fuzzy partition corresponding to $r_i$, for $1 \leq i \leq n$;
	\item $B$ has two attributes, $B.\degree = d$ and $B.\subblocks = \{B_1,\ldots,B_n\}$, $B$ is also called a {\em fuzzy block} and denoted by $\{B_1,\ldots,B_n\}_d$.
\myend
	\end{itemize}
\end{itemize}
\end{definition}

\begin{example}
Let $X = \{x_1, x_2, \ldots, x_7\}$ and let $r: X \times X \to [0,1]$ be the fuzzy relation specified by the following table.
\[
\begin{array}{|c||c|c|c|c|c|c|c|}
\hline
r & x_1 & x_2 & x_3 & x_4 & x_5 & x_6 & x_7 \\
\hline\hline
x_1 & 1 & 0.4 & 0.4 & 0.4 & 0.1 & 0.1 & 0 \\
\hline
x_2 & 0.4 & 1 & 0.6 & 0.6 & 0.1 & 0.1 & 0 \\
\hline
x_3 & 0.4 & 0.6 & 1 & 1 & 0.1 & 0.1 & 0 \\
\hline
x_4 & 0.4 & 0.6 & 1 & 1 & 0.1 & 0.1 & 0 \\
\hline
x_5 & 0.1 & 0.1 & 0.1 & 0.1 & 1 & 0.3 & 0 \\
\hline
x_6 & 0.1 & 0.1 & 0.1 & 0.1 & 0.3 & 1 & 0 \\
\hline
x_7 & 0 & 0 & 0 & 0 & 0 & 0 & 1 \\
\hline
\end{array}
\]

It is a fuzzy equivalence relation on $X$ w.r.t.\ the G\"odel semantics. 
The traditional fuzzy partition of $X$ that corresponds to $r$ is the set $\{\mu_1,\mu_2,\mu_{3,4},\mu_5,\mu_6,\mu_7\} \subset \mF(X)$ specified by: $\mu_i(x) = r(x_i, x)$ for $i \in \{1,2,5,6,7\}$ and $\mu_{3,4}(x) = r(x_3,x) = r(x_4,x)$, for $x \in X$.
The compact fuzzy partition corresponding to $r$ is the data structure denoted by 
\[
\{\{\{\{x_1\}_1, \{\{x_2\}_1,\{x_3,x_4\}_1\}_{0.6}\}_{0.4}, \{\{x_5\}_1, \{x_6\}_1\}_{0.3}\}_{0.1}, \{x_7\}_1\}_0.
\]
The advantage of this kind of data structure is that it uses only linear space. 
\myend
\end{example}

\subsection{Nondeterministic fuzzy transition systems}

A {\em nondeterministic fuzzy transition system} (\NFTS) is a structure $\mS = \tuple{S, \SA, \delta}$, where $S$ is a non-empty set of states, $\SA$ a non-empty set of actions, and $\delta \subseteq S \times \SA \times \mF(S)$ a~set called the transition relation. 
It is {\em finite} if all the components $S$, $\SA$ and $\delta$ are finite. 
We denote 
\[ \deltatt = \{\mu \mid \tuple{s,a,\mu} \in \delta \textrm{ for some $s$ and $a$}\} \]
and define the size of $\delta$ as follows, where $|X|$ denotes the cardinality of~$X$: 
\[ \size(\delta) = |\delta| + \sum_{\mu \in \deltatt} |\support(\mu)|. \]

\begin{Definition}[\cite{DBLP:journals/fss/WuCBD18}]\label{def: JHFHJ}
Given $R \subseteq S \times S$, the {\em lifted relation} of $R$ is the subset $R^\dag$ of $\mF(S) \times \mF(S)$ such that $\mu R^\dag \mu'$ iff there exists a function $e : S \times S \to [0, 1]$ that satisfies the following conditions:
\begin{itemize}
\item $\mu(s) = \bigvee_{s' \in S} e(s, s')$, for every $s \in S$;
\item $\mu'(s') = \bigvee_{s \in S} e(s, s')$, for every $s' \in S$;
\item $e(s, s') = 0$ if $\tuple{s, s'} \notin R$.
\myend
\end{itemize}
\end{Definition}

Given $R \subseteq S \times S$ and $s,s' \in S$, $\overrightarrow{R_s}$ denotes the set $\{s' \in S \mid$ $s R s'\}$, whereas $\overleftarrow{R_{s'}}$ denotes the set $\{s \in S \mid s R s'\}$. 
It is proved in~\cite[Theorem~3.2]{DBLP:journals/fss/WuCBD18} that $\mu R^\dag \mu'$ iff, for every $s,s' \in S$, 
\begin{equation}\label{eq: lifted relation}
\mu(s) \leq \mu'(\overrightarrow{R_s})\ \ \textrm{and}\ \ \mu'(s') \leq \mu(\overleftarrow{R_{s'}}), 
\end{equation}
and as the above mentioned function $e : S \times S \to [0, 1]$ we can take 
\[ \lambda\tuple{s,s'}.(\textrm{if $s R s'$ then $\min(\mu(s),\mu'(s'))$ else 0}). \]
In addition, $(R^{-1})^\dag = (R^\dag)^{-1}$ \cite[Lemma~3.3]{DBLP:journals/fss/WuCBD18}. 

The following notion of bisimulation comes from~\cite[Definition~3.5]{DBLP:journals/fss/WuCBD18}. 

\begin{definition}\label{def: KJFHA}
Let $\mS = \tuple{S, \SA, \delta}$ be an \NFTS. A relation $R \subseteq S \times S$ is called a {\em crisp auto-bisimulation of $\mS$} (or a {\em crisp bisimulation of $\mS$} for short) if, for every $\tuple{s,s'} \in R$, 
\begin{enumerate}
\item[(a)] for every $\tuple{s,a,\mu} \in \delta$, there exists $\tuple{s',a,\mu'} \in \delta$ such that $\mu R^\dag \mu'$;
\item[(b)] for every $\tuple{s',a,\mu'} \in \delta$, there exists $\tuple{s,a,\mu} \in \delta$ such that $\mu R^\dag \mu'$.
\myend
\end{enumerate}
\end{definition}

Wu {\em et al.}~\cite{DBLP:journals/fss/WuCBD18} proved that the greatest crisp bisimulation of any \NFTS exists and is an equivalence relation. 

The following notion of the lifted relation of a fuzzy relation $R \in \mF(S \times S)$ comes from~\cite[Definition~4]{DBLP:journals/tfs/QiaoZF23}. We use the notation $R^\ddag$ to make it different from the lifted relation $R^\dag$ of a crisp relation $R \subseteq S \times S$ (Definition~\ref{def: JHFHJ}). 

\begin{definition}
Given a fuzzy relation $R$ on $S$, the {\em lifted relation} $R^\ddag$ (w.r.t.~$\fand$) is the fuzzy relation on $\mF(S)$ defined as follows, for any $\mu, \mu' \in \mF(S)$: 
\begin{equation}
R^\ddag(\mu, \mu') = \big[\!\bigwedge_{s \in S} (\mu(s) \fto \bigvee_{s' \in S} (R(s,s') \fand \mu'(s')))\big] \land \big[\!\bigwedge_{s' \in S} (\mu'(s') \fto \bigvee_{s \in S} (R(s,s') \fand \mu(s)))\big]. \label{eq: JHFKJ}
\end{equation} 
\myend
\end{definition}

It is proved in \cite[Lemma~2]{DBLP:journals/tfs/QiaoZF23} that $(R^{-1})^\ddag = (R^\ddag)^{-1}$. 

The following notion of a fuzzy bisimulation is a corrected version of the one from~\cite[Definition~5]{DBLP:journals/tfs/QiaoZF23}. 

\begin{definition}\label{def: HGLAK}
Let $\mS = \tuple{S, \SA, \delta}$ be an \NFTS. A fuzzy relation $R \in \mF(S \times S)$ is called a {\em fuzzy auto-bisimulation of $\mS$ w.r.t.~$\fand$} (or a {\em fuzzy bisimulation of $\mS$} for short) if, for every $s,s' \in S$ with $R(s,s') > 0$,\footnote{Definition~5 of~\cite{DBLP:journals/tfs/QiaoZF23} does not require the condition $R(s,s') > 0$. This is a mistake, because without this condition an \NFTS may not have any fuzzy bisimulation, which contradicts the other results of~\cite{DBLP:journals/tfs/QiaoZF23}.} 
\begin{enumerate}
\item[(a)] for every $\tuple{s,a,\mu} \in \delta$, there exists $\tuple{s',a,\mu'} \in \delta$ such that $R(s,s') \leq R^\ddag(\mu,\mu')$;
\item[(b)] for every $\tuple{s',a,\mu'} \in \delta$, there exists $\tuple{s,a,\mu} \in \delta$ such that $R(s,s') \leq R^\ddag(\mu,\mu')$.
\myend
\end{enumerate}
\end{definition}

The greatest fuzzy bisimulation of $\mS$ always exists and is called the {\em fuzzy bisimilarity} of $\mS$.\footnote{See \cite[Proposition~5]{DBLP:journals/tfs/QiaoZF23} and take into account the above mentioned correction.} 
It is a fuzzy equivalence relation~\cite[Proposition~3]{DBLP:journals/tfs/QiaoZF23}. 

%===============================================================================

\subsection{Fuzzy labeled graphs}

A {\em fuzzy labeled graph} (\FLG) \cite{DBLP:journals/isci/Nguyen23,DBLP:journals/ijar/NguyenT24} is a structure $G = \tuple{V, E, L, \SV, \SE}$, with a set $V$ of vertices, a set $\SV$ of vertex labels, a set $\SE$ of edge labels, a fuzzy set \mbox{$E \in \mF(V \times \SE \times V)$} of labeled edges, and a function $L: V \to \mF(\SV)$ that labels vertices. 
If $V$, $\SV$ and $\SE$ are finite, then $G$ is {\em finite}. 
%
%For $x,y \in V$ and $r \in \SE$, $L(x)$ is the fuzzy set of labels of~$x$, whereas a positive value $E(x,r,y)$ represents the degree of an edge labeled by~$r$ from $x$ to~$y$. 

\begin{Definition}[\cite{DBLP:journals/ijar/NguyenT24}]\label{def: DSHGQ}
A {\em crisp auto-bisimulation} (or {\em crisp bisimulation} for short) of an \FLG $G = \tuple{V, E, L, \SV, \SE}$ is a non-empty relation $Z \subseteq V \times V$ such that, for every $\tuple{x,x'} \in Z$ and $r \in \SE$, 
\begin{enumerate}
\item[(a)] $L(x) = L(x')$, 
\item[(b)] for every $y \in V$ with \mbox{$E(x,r,y) > 0$}, there exists $y'\in V$ such that $y Z y'$ and \mbox{$E(x,r,y) \leq E(x',r,y')$}, 
\item[(c)] for every $y' \in V$ with \mbox{$E(x',r,y') > 0$}, there exists $y \in V$ such that $y Z y'$ and \mbox{$E(x',r,y') \leq E(x,r,y)$}.
\myend
\end{enumerate}
\end{Definition}

The greatest crisp bisimulation of an \FLG always exists and is an equivalence relation~\cite[Corollary~2.2]{DBLP:journals/ijar/NguyenT24}. Nguyen and Tran \cite{DBLP:journals/ijar/NguyenT24} provided an efficient algorithm for computing the partition corresponding to the greatest crisp bisimulation of a finite \FLG $G = \tuple{V, E, L, \SV, \SE}$, with the complexity order $O((m \log{l} + n) \log{n})$, where $n = |V|$, $m = |\support(E)|$ and $l = |\{E(e) : e \in \support(E)\} \cup \{0,1\}|$. When $m \geq n$, this complexity order is the same as $O(m\cdot\log{n}\cdot\log{l})$, which is within $O(m\log^2{n})$.

\begin{Definition}[\cite{DBLP:journals/isci/Nguyen23}]\label{def: HGAKF}
A {\em fuzzy auto-bisimulation} (or {\em fuzzy bisimulation} for short) of an \FLG $G = \tuple{V, E, L, \SV, \SE}$ (w.r.t.~$\fand$) is a fuzzy relation $Z \in \mF(V \times V)$ satisfying the following conditions, for every $p \in \SV$, $r \in \SE$ and every possible values for the free variables:
\begin{eqnarray}
\!\!\!\!\!\!\!\!\!\!\!\!\!&&
 Z(x,x') \leq (L(x)(p) \fequiv L(x')(p)) \label{eq: FB1} \\
\!\!\!\!\!\!\!\!\!\!\!\!\!&& \E y'\! \in\!V (Z(x,x') \!\fand\! E(x,r,y) \leq E(x',r,y') \!\fand\! Z(y,y')) \label{eq: FB2} \\
\!\!\!\!\!\!\!\!\!\!\!\!\!&& \E y \in\!V (Z(x,x') \!\fand\! E(x',r,y') \leq E(x,r,y) \!\fand\! Z(y,y')). \label{eq: FB3}
\end{eqnarray}
\myend
\end{Definition}

It is known that, if $\fand$ is continuous, then the greatest fuzzy bisimulation of any finite \FLG exists and is a fuzzy equivalence relation \cite[Corollary~5.3]{FBSML}. In \cite{DBLP:journals/isci/Nguyen23}, we provided an efficient algorithm for computing the compact fuzzy partition corresponding to the greatest fuzzy bisimulation of a finite \FLG $G$ in the case where $\fand$ is the G\"odel t-norm. Its complexity is of order $O((m\log{l} + n)\log{n})$, where $l$, $m$ and $n$ are as specified above. That algorithm directly yields another algorithm with the same complexity order $O((m\log{l} + n)\log{n})$ for computing the bisimilarity degree between two given vertices $x$ and $x'$ of $G$ (i.e., $Z(x,x')$ with $Z$ being the greatest fuzzy bisimulation of $G$ w.r.t.\ the G\"odel semantics). It also yields an algorithm with the complexity order $O(m\!\cdot\!\log{n}\!\cdot\!\log{l} + n^2)$ for explicitly computing the greatest fuzzy bisimulation of $G$ w.r.t.\ the G\"odel semantics. 

%===============================================================================

\section{Transforming nondeterministic fuzzy transition systems to fuzzy labeled graphs}
\label{section: transformation}

In this section, we define the notion of the \FLG corresponding to an \NFTS, formulate and prove the relationship between the greatest crisp (resp.\ fuzzy) bisimulation of an \NFTS and the greatest crisp (resp.\ fuzzy) bisimulation of its corresponding \FLG.

\begin{definition}\label{def: JHFSO 1}
Given an \NFTS $\mS = \tuple{S, \SA, \delta}$, the {\em \FLG corresponding to $\mS$} is the \FLG $G = \tuple{V, E, L, \SV, \SE}$ specified as follows:
\begin{itemize}
\item $\SV = \{s\}$ and $\SE = \SA \cup \{\varepsilon\}$, where $s$ stands for ``being a state'' and $\varepsilon \notin A$ stands for ``the empty action''; 
\item $V = S \cup \deltatt$ and $L: V \to \mF(\SV)$ is specified by $L(x)(s) = 1$ for $x \in S$, and $L(x)(s) = 0$ for $x \in \deltatt$; 
\item $E: V \times \SE \times V \to [0,1]$ is defined as follows:
    \begin{itemize}
    \item $E(s, a, \mu) = 1$, for $\tuple{s,a,\mu} \in \delta$, 
    \item $E(\mu, \varepsilon, t) = \mu(t)$, for $\mu \in \deltatt$ and $t \in S$,
    \item $E(x, r, y) = 0$ for the other triples $\tuple{x,r,y}$ (i.e., for $\tuple{x,r,y} \in V \times \SE \times V$ that neither belongs to $\delta$ nor is of the form $\tuple{\mu, \varepsilon, t}$ with $\mu \in \deltatt$ and $t \in S$). 
\myend
    \end{itemize}
\end{itemize}
\end{definition}

\begin{proposition}
Let $\mS = \tuple{S, \SA, \delta}$ be a finite \NFTS and $G = \tuple{V, E, L, \SV, \SE}$ the \FLG corresponding to $\mS$. Then,
$|V| = |S| + |\deltatt|$, 
$|\support(E)| = \size(\delta)$ and $\{E(e) : e \in \support(E)\}$ is the set of fuzzy values used in~$\mS$ extended with 1.
\end{proposition}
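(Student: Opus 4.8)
The plan is to establish all three identities by directly unfolding Definition~\ref{def: JHFSO 1}; the argument is elementary, and the only places that need a little care are the disjointness claims used when counting. For the first identity, recall that $V = S \cup \deltatt$ by definition. Since every element of $S$ is a state while every element of $\deltatt$ is a fuzzy subset of $S$, the two sets are of different kinds and hence disjoint, so $|V| = |S| + |\deltatt|$.

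For the second identity I would first describe $\support(E)$ explicitly. By the definition of $E$, a triple $\tuple{x,r,y} \in V \times \SE \times V$ has $E(x,r,y) > 0$ if and only if it belongs to exactly one of the following two families: (i)~$\tuple{x,r,y} = \tuple{s,a,\mu} \in \delta$, for which $E(x,r,y) = 1 > 0$; or (ii)~$x = \mu \in \deltatt$, $r = \varepsilon$, $y = t \in S$ with $\mu(t) > 0$, i.e.\ $t \in \support(\mu)$, for which $E(x,r,y) = \mu(t) > 0$. The two families are disjoint, since edges of type~(i) carry a label $a \in \SA$ whereas edges of type~(ii) carry the label $\varepsilon \notin \SA$ (and their sources are of different kinds as well). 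Moreover, the assignment $\tuple{s,a,\mu} \mapsto \tuple{s,a,\mu}$ is a bijection of $\delta$ onto the family~(i) (a triple is determined by its components, and $\delta$ is a set), and $\tuple{\mu,t}\mapsto\tuple{\mu,\varepsilon,t}$ is a bijection of $\{\tuple{\mu,t} : \mu \in \deltatt,\ t\in\support(\mu)\}$ onto the family~(ii). Hence
\[
|\support(E)| \;=\; |\delta| + \sum_{\mu\in\deltatt} |\support(\mu)| \;=\; \size(\delta).
\]

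For the third identity, the same case analysis shows that $\{E(e) : e\in\support(E)\} = \{1\} \cup \{\mu(t) : \mu\in\deltatt,\ t\in\support(\mu)\}$, and the second set is precisely the set of (nonzero) fuzzy values occurring in the transitions of $\mS$, i.e.\ the set of fuzzy values used in $\mS$; therefore $\{E(e) : e\in\support(E)\}$ is that set extended with~$1$ (assuming $\delta\neq\emptyset$, so that $\mS$ has at least one transition; otherwise everything is trivial). I do not anticipate any genuine difficulty here: this is a bookkeeping proposition whose purpose is to let the complexity bounds known for \FLG{}s — stated in terms of $|V|$, $|\support(E)|$ and the number of distinct values of $E$ — be rephrased in terms of $|S|$, $\size(\delta)$ and the number of fuzzy values of $\mS$. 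The only thing to be careful about is the disjointness of the two edge families and the absence of collisions within each, both of which become immediate once $\varepsilon\notin\SA$ is used.
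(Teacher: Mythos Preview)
Your proof is correct and follows exactly the approach the paper intends: the paper itself simply states that the proposition ``directly follows from Definition~\ref{def: JHFSO 1}'', and your argument is the explicit unfolding of that remark, with the disjointness checks spelled out. There is nothing to add.
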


This proposition directly follows from Definition~\ref{def: JHFSO 1}. 

\begin{remark}\label{remark: JHFJS}
The cost of constructing the \FLG $G$ that corresponds to a given finite \NFTS $\mS = \tuple{S, \SA, \delta}$ depends on their data representation. Under typical assumptions (e.g., a computer word can be used to identify any state or action) and by using an appropriate data representation for $G$ and $\mS$ (e.g., a fuzzy set is stored by restricting to its support, fuzzy subsets of $S$ are identified by references and represented without duplicates), the cost is of the order \InitComplexity.
\end{remark}

\begin{example}\label{example: HGFJS}
Consider the following \NFTS $\mS = \tuple{S, \SA, \delta}$, 

\medskip

\newcommand{\distA}{2.4cm}
\newcommand{\distB}{1.5cm}
\begin{center}
\begin{tikzpicture}[->,>=stealth]
\node[circle,draw] (s1) {$s_1$};
\node (m1) [node distance=\distA, right of=s1] {$\mu_1$};
\node[circle,draw] (s2) [node distance=\distA, right of=m1] {$s_2$};
\node (m3) [node distance=\distA, right of=s2] {$\mu_3$};
\node (m2) [node distance=\distB, below of=s1] {$\mu_2$};
\node[circle,draw] (s3) [node distance=\distA, right of=m2] {$s_3$};
\node[circle,draw] (s4) [node distance=\distA, right of=s3] {$s_4$};
\node[circle,draw] (s5) [node distance=\distA, right of=s4] {$s_5$};
\draw (s1) to node [above]{$a$} (m1);
\draw (m1) to node [above]{\footnotesize{0.5}} (s2);
\draw (s2) to node [above]{$a$} (m3);
\draw (s1) to node [left]{$a$} (m2);
\path (m1) edge [bend right=20] node [left]{\footnotesize{0.8}} (s3);
\path (s3) edge [bend right=20] node [right]{$b$} (m1);
\draw (m2) to node [above]{\footnotesize{0.6}} (s3);
\draw (m3) to node [left, xshift=1mm, yshift=1.5mm]{\footnotesize{0.7}} (s4);
\draw (s4) to node [right, xshift=0mm, yshift=1.5mm]{$b$} (m1);
\path (m2) edge [bend right=20] node [below]{\footnotesize{0.4}} (s5);
\path (m3) edge [bend right=20] node [left, xshift=0.0mm, yshift=-1mm]{\footnotesize{0.9}} (s5);
\path (s5) edge [bend right=20] node [right, yshift=-1mm]{$a$} (m3);
\end{tikzpicture}
\end{center}	

\medskip

\noindent
which is specified by: $S = \{s_1, s_2, s_3, s_4, s_5\}$, $A = \{a,b\}$, $\delta = \{\tuple{s_1,a,\mu_1}$, $\tuple{s_1,a,\mu_2}$, $\tuple{s_2,a,\mu_3}$, $\tuple{s_3,b,\mu_1}$, $\tuple{s_4,b,\mu_1}$, $\tuple{s_5,a,\mu_3}\}$, with $\mu_1 = \{s_2\!:\!0.5$, $s_3\!:\!0.8\}$, $\mu_2 = \{s_3\!:\!0.6$, $s_5\!:\!0.4\}$ and $\mu_3 = \{s_4\!:\!0.7$, $s_5\!:\!0.9\}$. 
The \FLG $G = \tuple{V, E, L, \SV, \SE}$ that corresponds to $\mS$ is illustrated below 

\medskip

\renewcommand{\distA}{2.4cm}
\renewcommand{\distB}{2.0cm}
\begin{center}
\begin{tikzpicture}[->,>=stealth]
\node[circle,draw] (s1) {$s_1$};
\node (m1) [node distance=\distA, right of=s1] {$\mu_1$};
\node[circle,draw] (s2) [node distance=\distA, right of=m1] {$s_2$};
\node (m3) [node distance=\distA, right of=s2] {$\mu_3$};
\node (m2) [node distance=\distB, below of=s1] {$\mu_2$};
\node[circle,draw] (s3) [node distance=\distA, right of=m2] {$s_3$};
\node[circle,draw] (s4) [node distance=\distA, right of=s3] {$s_4$};
\node[circle,draw] (s5) [node distance=\distA, right of=s4] {$s_5$};
\draw (s1) to node [above]{\footnotesize{$a\!:\!1$}} (m1);
\draw (m1) to node [above]{\footnotesize{$\varepsilon\!:\!0.5$}} (s2);
\draw (s2) to node [above]{\footnotesize{$a\!:\!1$}} (m3);
\draw (s1) to node [left]{\footnotesize{$a\!:\!1$}} (m2);
\path (m1) edge [bend right=20] node [left]{\footnotesize{$\varepsilon\!:\!0.8$}} (s3);
\path (s3) edge [bend right=20] node [right, yshift=-3mm]{\footnotesize{$b\!:\!1$}} (m1);
\draw (m2) to node [above]{\footnotesize{$\varepsilon\!:\!0.6$}} (s3);
\draw (m3) to node [left, xshift=1mm, yshift=1.5mm]{\footnotesize{$\varepsilon\!:\!0.7$}} (s4);
\draw (s4) to node [right, yshift=1.2mm]{\footnotesize{$b\!:\!1$}} (m1);
\path (m2) edge [bend right=20] node [below]{\footnotesize{$\varepsilon\!:\!0.4$}} (s5);
\path (m3) edge [bend right=20] node [left, yshift=-3mm]{\footnotesize{$\varepsilon\!:\!0.9$}} (s5);
\path (s5) edge [bend right=20] node [right]{\footnotesize{$a\!:\!1$}} (m3);
\end{tikzpicture}
\end{center}

\medskip

\noindent
and has $V = S \cup \deltatt = S \cup \{\mu_1, \mu_2, \mu_3\}$. Examples of edges of $G$ are: $E(s_1,a,\mu_1) = 1$, $E(\mu_1,\varepsilon,s_3) = 0.8$.  
\myend
\end{example}

\begin{theorem}\label{theorem: HFLWA}
Let $\mS = \tuple{S, \SA, \delta}$ be a finite \NFTS and $G = \tuple{V, E, L, \SV, \SE}$ the \FLG corresponding to $\mS$. 
\begin{enumerate}
\item If $R$ is a crisp bisimulation of $\mS$, then the following relation $Z$ is a crisp bisimulation of $G$:
\begin{equation}\label{eq: JHDJH}
Z = R \cup \{\tuple{\mu,\mu'} \in \deltatt\!\times\deltatt \mid \mu R^\dag \mu'\}.
\end{equation}
\item If $Z$ is a crisp bisimulation of $G$, then $R = Z \cap (S \times S)$ is a crisp bisimulation of $\mS$.  
\end{enumerate}
\end{theorem}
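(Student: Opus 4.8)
The plan is to unfold the two definitions of crisp bisimulation involved, namely Definition~\ref{def: DSHGQ} for the \FLG $G$ and Definition~\ref{def: KJFHA} for the \NFTS $\mS$, and to check their conditions directly, using as the only substantial tool the characterization~\eqref{eq: lifted relation} of the lifted relation $R^\dag$ together with the finiteness of $S$. Finiteness is what makes the two sides fit: the value $\mu'(\overrightarrow{R_s})=\bigvee_{t\in\overrightarrow{R_s}}\mu'(t)$ appearing in~\eqref{eq: lifted relation} is then attained at a single state, so an inequality ``$\mu(s)\le\mu'(\overrightarrow{R_s})$'' can be rephrased as ``there is a vertex $t'$ with $t' \in \overrightarrow{R_s}$ and $E(\mu,\varepsilon,s)\le E(\mu',\varepsilon,t')$'', which is precisely the shape of conditions~(b)--(c) of Definition~\ref{def: DSHGQ}, and conversely. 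I also use that in $G$ the positive-weight edges leaving a state $s\in S$ are exactly the $a$-edges $E(s,a,\mu)=1$ with $\tuple{s,a,\mu}\in\delta$, the positive-weight edges leaving a distribution $\mu\in\deltatt$ are exactly the $\varepsilon$-edges $E(\mu,\varepsilon,t)=\mu(t)$ with $\mu(t)>0$, and that $L$ separates states from distributions, so condition~(a) of Definition~\ref{def: DSHGQ} is never an issue.

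For part~1, fix $\tuple{x,x'}\in Z$ and $r\in\SE$ and distinguish the case $\tuple{x,x'}\in R$ from the case $\tuple{x,x'}=\tuple{\mu,\mu'}\in\deltatt\times\deltatt$ with $\mu R^\dag\mu'$. In the first case, a positive-weight edge $E(s,a,\mu)=1$ is matched, via conditions~(a)--(b) of Definition~\ref{def: KJFHA}, by some $\tuple{s',a,\mu'}\in\delta$ with $\mu R^\dag\mu'$; since $E(s',a,\mu')=1$ and $\tuple{\mu,\mu'}\in Z$ by~\eqref{eq: JHDJH}, conditions~(b)--(c) of Definition~\ref{def: DSHGQ} follow. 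In the second case, for $t\in S$ with $\mu(t)>0$, \eqref{eq: lifted relation} gives $\mu(t)\le\mu'(\overrightarrow{R_t})$, hence, by finiteness, a state $t'$ with $t R t'$ and $\mu'(t')\ge\mu(t)$; then $\tuple{t,t'}\in R\subseteq Z$ and $E(\mu',\varepsilon,t')=\mu'(t')\ge\mu(t)=E(\mu,\varepsilon,t)$, which is condition~(b), while condition~(c) is the symmetric argument via $\mu'(t')\le\mu(\overleftarrow{R_{t'}})$. Non-emptiness of $Z$ is inherited from that of $R$.

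For part~2 I would first isolate the key step: if $\mu,\mu'\in\deltatt$ and $\mu Z\mu'$, then $\mu R^\dag\mu'$, where $R=Z\cap(S\times S)$. To prove it, take $t\in S$ with $\mu(t)>0$; then $E(\mu,\varepsilon,t)>0$, so condition~(b) of Definition~\ref{def: DSHGQ} for $Z$ supplies $y\in V$ with $\tuple{t,y}\in Z$ and $E(\mu',\varepsilon,y)\ge\mu(t)>0$, and this last inequality forces $y\in S$ with $E(\mu',\varepsilon,y)=\mu'(y)$; hence $t':=y$ satisfies $t R t'$ and $\mu(t)\le\mu'(t')\le\mu'(\overrightarrow{R_t})$. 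Symmetrically, condition~(c) yields $\mu'(t')\le\mu(\overleftarrow{R_{t'}})$ for all $t'\in S$, so~\eqref{eq: lifted relation} gives $\mu R^\dag\mu'$. Granting this, fix $\tuple{s,s'}\in R$; for every $\tuple{s,a,\mu}\in\delta$ we have $E(s,a,\mu)=1$, so condition~(b) of Definition~\ref{def: DSHGQ} provides $y'\in V$ with $\tuple{\mu,y'}\in Z$ and $E(s',a,y')\ge 1$, whence $E(s',a,y')=1$, which forces $y'\in\deltatt$ and $\tuple{s',a,y'}\in\delta$; the key step then gives $\mu R^\dag y'$, establishing condition~(a) of Definition~\ref{def: KJFHA}, and condition~(b) follows symmetrically from condition~(c) of Definition~\ref{def: DSHGQ}.

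The case analysis and the bookkeeping of which edges of $G$ are nonzero are routine; the one genuinely load-bearing point is the translation between the set-valued inequalities in~\eqref{eq: lifted relation} and the existential edge-matching conditions of an \FLG bisimulation, and the fact that it works in both directions rests precisely on $S$ being finite, so that the relevant suprema are maxima realized by single states. I do not foresee any nontrivial estimate beyond this.
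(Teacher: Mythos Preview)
Your proposal is correct and follows essentially the same approach as the paper's proof: the same case split on $S$ versus $\deltatt$ in part~1, and the same two-step argument in part~2 (first match $\mu$ with some $\mu'\in\deltatt$ via the $a$-edge, then derive $\mu R^\dag\mu'$ from the $\varepsilon$-edges using~\eqref{eq: lifted relation} and finiteness). The only cosmetic difference is that you isolate the implication ``$\mu Z\mu'$ implies $\mu R^\dag\mu'$'' as a separate key step before applying it, whereas the paper proves it inline for the specific pair obtained.
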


\begin{proof}
Consider the first assertion and assume that $R$ is a crisp bisimulation of $\mS$ and $Z$ is defined by~\eqref{eq: JHDJH}. We need to show that $Z$ is a crisp bisimulation of $G$. Let $\tuple{x,x'} \in Z$ and $r \in \SE$. 

If $x R x'$, then $x,x' \in S$, otherwise $x,x' \in \deltatt$. In both of the cases, according to the definition of~$L$, we have $L(x) = L(x')$. That is, the condition~(a) of Definition~\ref{def: DSHGQ} holds. 

Consider the condition~(b) of Definition~\ref{def: DSHGQ} and let $y \in V$ with \mbox{$E(x,r,y) > 0$}. 
Consider the case $x \in S$. We must have $r \in A$, $\tuple{x,r,y} \in \delta$ and $E(x,r,y) = 1$. 
Since $x Z x'$, we also have $x R x'$. Since $R$ is a crisp bisimulation of $\mS$, it follows that there exists $\tuple{x',r,y'} \in \delta$ such that $y R^\dag y'$. Thus, $y Z y'$ and $E(x', r, y') = 1 = E(x,r,y)$. 
Now consider the case $x \notin S$. We have $x \in \deltatt$. Since \mbox{$E(x,r,y) > 0$}, it follows that $r = \varepsilon$ and \mbox{$y \in S$}. Since $x Z x'$, we have $x' \in \deltatt$ and $x R^\dag x'$. Since $\mS$ is finite, by~\eqref{eq: lifted relation} with $\mu$, $\mu'$ and $s$ replaced by $x$, $x'$ and $y$, respectively, there exists $y' \in S$ such that $y R y'$ and $x(y) \leq x'(y')$. This implies that $y Z y'$ and \mbox{$E(x,r,y) \leq E(x',r,y')$}. Therefore, in both of the cases, the condition~(b) of Definition~\ref{def: DSHGQ} holds. 

Consider the condition~(c) of Definition~\ref{def: DSHGQ} and let $y' \in V$ with \mbox{$E(x',r,y') > 0$}. 
Consider the case $x' \in S$. We must have $r \in A$, $\tuple{x',r,y'} \in \delta$ and $E(x',r,y') = 1$. 
Since $x Z x'$, we also have $x R x'$. Since $R$ is a crisp bisimulation of $\mS$, it follows that there exists $\tuple{x,r,y} \in \delta$ such that $y R^\dag y'$. Thus, $y Z y'$ and $E(x, r, y)
 = 1 = E(x',r,y')$. 
Now consider the case $x' \notin S$. We have $x' \in \deltatt$. Since \mbox{$E(x',r,y') > 0$}, it follows that $r = \varepsilon$ and $y' \in S$. Since $x Z x'$, we have $x \in \deltatt$ and $x R^\dag x'$. Since $\mS$ is finite, by~\eqref{eq: lifted relation} with $\mu$, $\mu'$ and $s'$ replaced by $x$, $x'$ and $y'$, respectively, there exists $y \in S$ such that $y R y'$ and $x'(y') \leq x(y)$. This implies that $y Z y'$ and \mbox{$E(x',r,y') \leq E(x,r,y)$}. Therefore, in both of the cases, the condition~(c) of Definition~\ref{def: DSHGQ} holds.\footnote{In comparison with the previous paragraph, this one shows how a similar proof for the ``converse'' can be made detailed.} 

We have proved that $Z$ satisfies the conditions stated in Definition~\ref{def: DSHGQ} and is therefore a crisp bisimulation of~$G$.

Now consider the second assertion of the theorem and assume that $Z$ is a crisp bisimulation of~$G$. 
We need to prove that $R = Z \cap (S \times S)$ is a crisp bisimulation of~$\mS$. 
Let $\tuple{s,s'} \in R$ and $a \in A$. Thus, $s, s' \in S$ and $s Z s'$. 

Consider the condition~(a) of Definition~\ref{def: KJFHA} and let \mbox{$\tuple{s,a,\mu} \in \delta$}. Since $Z$ is a crisp bisimulation of~$G$, $s Z s'$ and $E(s,a,\mu) = 1$, there must exist $\mu' \in V$ such that $\mu Z \mu'$ and $E(s',a,\mu') = 1$. Hence, $\tuple{s',a,\mu'} \in \delta$. We prove that $\mu R^\dag \mu'$. By~\cite[Theorem~3.2]{DBLP:journals/fss/WuCBD18}, it suffices to prove that
\begin{eqnarray}
\!\!\!\!\!\!\!\!&& \textrm{for every $u \in S$ with $\mu(u) > 0$, there exists $u' \in S$ such that $u R u'$ and $\mu(u) \leq \mu'(u')$;} \label{eq: HJDJH1} \\[1ex]
\!\!\!\!\!\!\!\!&& \textrm{for every $u' \in S$ with $\mu'(u') > 0$, there exists $u \in S$ such that $u R u'$ and $\mu'(u') \leq \mu(u)$.} \label{eq: HJDJH2} 
\end{eqnarray}

Consider~\eqref{eq: HJDJH1} and let $u \in S$ with $\mu(u) > 0$. Since $Z$ is a crisp bisimulation of~$G$, $\mu Z \mu'$ and $E(\mu,\varepsilon,u) = \mu(u) > 0$, there must exist $u' \in V$ such that $u Z u'$ and $E(\mu,\varepsilon,u) \leq E(\mu',\varepsilon,u')$. This implies that $u' \in S$, $u R u'$ and $\mu(u) \leq \mu'(u')$. Therefore, \eqref{eq: HJDJH1} holds. 

The assertion~\eqref{eq: HJDJH2} can be proved analogously. Thus, we have proved that $R$ satisfies the condition~(a) of Definition~\ref{def: KJFHA}. Similarly, it can be proved that $R$ also satisfies the condition~(b) of Definition~\ref{def: KJFHA}. This completes the proof.
\myend
\end{proof}

\begin{corollary}\label{cor: KJDKW}
Let $\mS = \tuple{S, \SA, \delta}$ be a finite \NFTS and $G = \tuple{V, E, L, \SV, \SE}$ the \FLG corresponding to $\mS$. If $Z$ is the greatest crisp bisimulation of $G$, then $R = Z \cap (S \times S)$ is the greatest crisp bisimulation of $\mS$.
\end{corollary}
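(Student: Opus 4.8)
The plan is to combine the two assertions of Theorem~\ref{theorem: HFLWA} with the fact (cited from~\cite{DBLP:journals/fss/WuCBD18} and~\cite{DBLP:journals/ijar/NguyenT24}) that the greatest crisp bisimulations of $\mS$ and of $G$ both exist. Let $Z$ be the greatest crisp bisimulation of $G$ and set $R = Z \cap (S \times S)$. By the second assertion of Theorem~\ref{theorem: HFLWA}, $R$ is a crisp bisimulation of $\mS$; it remains to show it is the greatest one.

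To that end, let $R^*$ be the greatest crisp bisimulation of $\mS$ (which exists by~\cite{DBLP:journals/fss/WuCBD18}). By the first assertion of Theorem~\ref{theorem: HFLWA}, the relation $Z^* = R^* \cup \{\tuple{\mu,\mu'} \in \deltatt \times \deltatt \mid \mu (R^*)^\dag \mu'\}$ is a crisp bisimulation of $G$. Since $Z$ is the greatest crisp bisimulation of $G$, we have $Z^* \subseteq Z$. Intersecting both sides with $S \times S$ and noting that $Z^* \cap (S \times S) = R^*$ (because the added pairs lie in $\deltatt \times \deltatt$ and $\deltatt$ is disjoint from $S$ in $G$), we obtain $R^* \subseteq Z \cap (S \times S) = R$. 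Combined with the fact that $R$ is a crisp bisimulation of $\mS$ and $R^*$ is the greatest such, this forces $R = R^*$, i.e., $R$ is the greatest crisp bisimulation of $\mS$.

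The argument is essentially a one-line sandwich once the two directions of Theorem~\ref{theorem: HFLWA} are in hand, so there is no real obstacle. The only point deserving a moment's care is the identity $Z^* \cap (S \times S) = R^*$: this uses that in the \FLG $G$ corresponding to $\mS$ the vertex set is the \emph{disjoint} union $V = S \cup \deltatt$ (states and fuzzy-set vertices are kept distinct by construction in Definition~\ref{def: JHFSO 1}, and the vertex labels $L$ separate them), so the extra pairs $\tuple{\mu,\mu'}$ in $Z^*$ contribute nothing upon intersection with $S \times S$.
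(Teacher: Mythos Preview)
Your proof is correct and follows essentially the same approach as the paper's: apply the second assertion of Theorem~\ref{theorem: HFLWA} to see that $R$ is a crisp bisimulation of~$\mS$, then use the first assertion to show that any crisp bisimulation of~$\mS$ lifts to a crisp bisimulation of~$G$ and hence is contained in~$R$. The only cosmetic difference is that the paper works with an \emph{arbitrary} crisp bisimulation $R'$ of~$\mS$ and shows $R' \subseteq R$ directly (thus avoiding a separate appeal to the existence of the greatest crisp bisimulation of~$\mS$), whereas you fix $R^*$ to be the greatest one up front and derive $R^* \subseteq R$; the underlying sandwich argument is identical.
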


\begin{proof}
Let $Z$ be the greatest crisp bisimulation of $G$ and let $R = Z \cap (S \times S)$. By Theorem~\ref{theorem: HFLWA}, $R$ is a crisp bisimulation of $\mS$. Let $R'$ be an arbitrary crisp bisimulation of $\mS$ and let 
\[ Z' = R' \cup \{\tuple{\mu,\mu'} \in \deltatt\! \times \deltatt \mid \mu (R')^\dag \mu'\}. \]
By Theorem~\ref{theorem: HFLWA}, $Z'$ is a crisp bisimulation of $G$. Hence, $Z' \subseteq Z$ and 
\[ R' = Z' \cap (S \times S) \subseteq Z \cap (S \times S) = R. \] 
Therefore,  $R$ is the greatest crisp bisimulation of $\mS$.
\myend
\end{proof}

\begin{theorem}\label{theorem: JHFKJ}
Let $\mS = \tuple{S, \SA, \delta}$ be a finite \NFTS and $G = \tuple{V, E, L, \SV, \SE}$ the \FLG corresponding to $\mS$. 
\begin{enumerate}
\item If $R$ is a fuzzy bisimulation of $\mS$, then the fuzzy relation $Z$ on $V$ specified as follows is a fuzzy bisimulation of $G$:
    \begin{itemize}
    \item $Z(s,s') = R(s,s')$ for $s,s' \in S$, 
    \item \mbox{$Z(\mu,\mu') = R^\ddag(\mu,\mu')$} for $\mu, \mu' \in \deltatt$, 
    \item $Z(x,x') = 0$ for $\tuple{x,x'}$ from $(V \times V) - (S \times S) - (\deltatt \times \deltatt)$.
    \end{itemize}
\item If $Z$ is a fuzzy bisimulation of $G$, then $R = Z | _{S \times S}$ is a fuzzy bisimulation of $\mS$.  
\end{enumerate}
\end{theorem}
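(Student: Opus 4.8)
The plan is to mirror the proof of Theorem~\ref{theorem: HFLWA}, replacing the crisp back-and-forth conditions by their graded counterparts and using the residuation law $a \fand b \leq c \iff b \leq (a \fto c)$ (valid for every left-continuous t-norm $\fand$) throughout. First I would record the easy observation that, by the definition of $L$ in Definition~\ref{def: JHFSO 1}, $L(x)(s) \fequiv L(x')(s)$ equals $1$ whenever $x$ and $x'$ both belong to $S$ or both belong to $\deltatt$, while $Z(x,x') = 0$ whenever one of them belongs to $S$ and the other to $\deltatt$; hence condition~\eqref{eq: FB1} is immediate in both assertions.

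For the first assertion, assume $R$ is a fuzzy bisimulation of $\mS$ and $Z$ is as defined in the statement. Condition~\eqref{eq: FB2} is checked by a case analysis on the location of $x, x'$, for fixed $r \in \SE$ and $y \in V$. When $x, x' \in S$ and $E(x,r,y) > 0$, necessarily $r \in A$, $\tuple{x,r,y} \in \delta$ and $E(x,r,y) = 1$, so the left-hand side equals $R(x,x') = Z(x,x')$; applying condition~(a) of Definition~\ref{def: HGLAK} to $\tuple{x,r,y}$ gives $\tuple{x',r,\mu'} \in \delta$ with $R(x,x') \leq R^\ddag(y,\mu')$, and $y' := \mu'$ works since $E(x',r,\mu') = 1$ and $Z(y,\mu') = R^\ddag(y,\mu')$. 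When $x, x' \in \deltatt$ and $E(x,r,y) > 0$, necessarily $r = \varepsilon$, $y \in S$ and $E(x,r,y) = x(y)$; unfolding $Z(x,x') = R^\ddag(x,x')$ via~\eqref{eq: JHFKJ} and applying residuation yields $Z(x,x') \fand x(y) \leq \bigvee_{s' \in S}(R(y,s') \fand x'(s'))$, and, $S$ being finite, this supremum is attained at some $y' \in S$, for which $E(x',\varepsilon,y') = x'(y')$ and $Z(y,y') = R(y,y')$ close~\eqref{eq: FB2}. All remaining cases are trivial, since then the left-hand side is $0$ and $V \neq \emptyset$. Condition~\eqref{eq: FB3} is symmetric, using condition~(b) of Definition~\ref{def: HGLAK} and the second conjunct of~\eqref{eq: JHFKJ}.

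For the second assertion, assume $Z$ is a fuzzy bisimulation of $G$ and put $R = Z|_{S \times S}$. Fix $s, s' \in S$ with $R(s,s') > 0$ and $a \in A$, and let $\tuple{s,a,\mu} \in \delta$. Applying~\eqref{eq: FB2} of $G$ with $x = s$, $x' = s'$, $r = a$, $y = \mu$, and using $E(s,a,\mu) = 1$ and $Z(s,s') > 0$, forces the witness to be some $\mu' \in \deltatt$ with $\tuple{s',a,\mu'} \in \delta$ and $R(s,s') = Z(s,s') \leq Z(\mu,\mu')$. Everything then comes down to the key inequality $Z(\mu,\mu') \leq R^\ddag(\mu,\mu')$ for $\mu, \mu' \in \deltatt$, which gives $R(s,s') \leq R^\ddag(\mu,\mu')$, i.e.\ condition~(a) of Definition~\ref{def: HGLAK}; condition~(b) is obtained the same way from~\eqref{eq: FB3} of $G$.

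To prove the key inequality I would bound each of the two conjuncts of~\eqref{eq: JHFKJ} separately. For the first, fix $u \in S$: if $\mu(u) = 0$ the residuum is $1$; otherwise $E(\mu,\varepsilon,u) = \mu(u) > 0$, so~\eqref{eq: FB2} of $G$ (with $x = \mu$, $x' = \mu'$, $r = \varepsilon$, $y = u$) yields $y'' \in V$ with $Z(\mu,\mu') \fand \mu(u) \leq E(\mu',\varepsilon,y'') \fand Z(u,y'')$; since $E(\mu',\varepsilon,y'')$ is $0$ unless $y'' \in S$, in which case it equals $\mu'(y'')$ and $Z(u,y'') = R(u,y'')$, one always has $Z(\mu,\mu') \fand \mu(u) \leq \bigvee_{s' \in S}(R(u,s') \fand \mu'(s'))$, and residuation followed by the infimum over $u$ bounds this conjunct; the second conjunct follows symmetrically from~\eqref{eq: FB3} of $G$. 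I expect this last step to be the main obstacle: one must unwind the definition~\eqref{eq: JHFKJ} of $R^\ddag$ and feed the graded conditions of $G$ into it through residuation, while carefully tracking the edge cases in which a witness returned by~\eqref{eq: FB2} or~\eqref{eq: FB3} falls outside $S$, so that the relevant $E$-value is $0$ and the inequality holds vacuously. Everything else is a routine graded translation of the crisp argument of Theorem~\ref{theorem: HFLWA}.
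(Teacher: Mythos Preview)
Your proposal is correct and follows essentially the same approach as the paper's own proof: the same case analysis on whether $x,x'$ lie in $S$ or in $\deltatt$, the same use of condition~(a) of Definition~\ref{def: HGLAK} for the $S$-case and of the defining formula~\eqref{eq: JHFKJ} for the $\deltatt$-case in the first assertion, and the same reduction of the second assertion to the key inequality $Z(\mu,\mu') \leq R^\ddag(\mu,\mu')$ proved via~\eqref{eq: FB2} and~\eqref{eq: FB3}. The only cosmetic difference is that you foreground the residuation law explicitly, whereas the paper manipulates the inequalities directly; one small point worth tightening is that when $x,x' \in S$ and $E(x,r,y) > 0$ you should also assume $R(x,x') > 0$ before invoking condition~(a) of Definition~\ref{def: HGLAK} (the case $R(x,x') = 0$ being absorbed into your ``remaining cases'' clause).
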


\begin{proof}
Consider the first assertion and assume that $R$ is a fuzzy bisimulation of $\mS$ and $Z$ is defined as in that assertion. We need to show that $Z$ is a fuzzy bisimulation of $G$. Let $x,x' \in V$ and $r \in \SE$. 

Consider the condition~\eqref{eq: FB1} with $p$ replaced by the unique element $s$ of $\SV$. If $x,x' \in S$, then $L(x)(s) = L(x')(s) = 1$ and the condition~\eqref{eq: FB1} clearly holds. If $x,x' \in \deltatt$, then $L(x)(s) = L(x')(s) = 0$ and the condition~\eqref{eq: FB1} also holds. For the other cases, we have $Z(x,x') = 0$ and the condition~\eqref{eq: FB1} also holds. 

Let $y \in V$ and consider the condition~\eqref{eq: FB2}. If $Z(x,x') = 0$ or $E(x,r,y) = 0$, then that condition clearly holds. So, we assume that $Z(x,x') > 0$ and $E(x,r,y) > 0$. Thus, $x,x' \in S$ or $x,x' \in \deltatt$. 
Consider the case $x,x' \in S$. We have $Z(x,x') = R(x,x')$. Since $x \in S$ and $E(x,r,y) > 0$, we must have $r \in A$, $\tuple{x,r,y} \in \delta$ and $E(x,r,y) = 1$. Since $R$ is a fuzzy bisimulation of $\mS$, there exists $\tuple{x',r,y'} \in \delta$ such that $R(x,x') \leq R^\ddag(y, y')$. Thus, $E(x',r,y') = 1$. Therefore, 
$Z(x,x') \fand E(x,r,y) = R(x,x') \leq R^\ddag(y, y') = E(x',r,y') \fand Z(y,y')$.
Now consider the case $x,x' \in \deltatt$. Since $E(x,r,y) > 0$ and $Z(x,x') > 0$, we have $r = \varepsilon$, $y \in S$, $x' \in \deltatt$ and $Z(x,x') = R^\ddag(x,x')$. Thus, $E(x,r,y) = x(y)$. Since $\mS$ is finite, by~\eqref{eq: JHFKJ} with $\mu$ and $\mu'$ replaced by $x$ and $x'$, respectively, there exists $y' \in S$ such that 
\[ R^\ddag(x,x') \leq (x(y) \fto (R(y,y') \fand x'(y'))), \]
which is equivalent to 
\[ R^\ddag(x,x') \fand x(y) \leq R(y,y') \fand x'(y'). \]
Therefore, 
\[ Z(x,x') \fand E(x,r,y) = R^\ddag(x,x') \fand x(y) \leq x'(y') \fand R(y,y') = E(x',r,y') \fand Z(y,y'). \] 

We have proved that, for any $y \in V$, the condition~\eqref{eq: FB2} holds. 
Similarly, for any $y' \in V$, it can be proved that the condition~\eqref{eq: FB3} holds. 
Therefore, $Z$ is a fuzzy bisimulation of~$G$.

Now consider the second assertion of the theorem and assume that $Z$ is a fuzzy bisimulation of~$G$. 
We need to prove that $R = Z|_{S \times S}$ is a fuzzy bisimulation of~$\mS$. 
Let $a \in A$ and $s,s' \in S$ with $R(s,s') > 0$. We have $Z(s,s') = R(s,s') > 0$. 

Consider the condition~(a) of Definition~\eqref{def: HGLAK} and let $\tuple{s,a,\mu} \in \delta$. 
Since $Z$ is a fuzzy bisimulation of $G$, there exists $\mu' \in V$ such that 
\begin{equation}\label{eq: JHFJH}
    Z(s,s') \fand E(s,a,\mu) \leq E(s',a,\mu') \fand Z(\mu,\mu').
\end{equation}
Since $Z(s,s') > 0$ and $E(s,a,\mu) = 1$, it follows that $E(s',a,\mu') > 0$. Hence, $\tuple{s',a,\mu'} \in \delta$ and $E(s',a,\mu') = 1$. By~\eqref{eq: JHFJH}, it follows that $Z(s,s') \leq Z(\mu,\mu')$. 
We now prove that $Z(\mu,\mu') \leq R^\ddag(\mu,\mu')$, which allows to derive $R(s,s') \leq R^\ddag(\mu,\mu')$. 
By~\eqref{eq: JHFKJ}, it suffices to prove that 
\begin{eqnarray}
\!\!\!\!\!\!\!\!\!\!\!\!\!&& \textrm{for every $t \in S$, there exists $t' \in S$ such that $Z(\mu,\mu') \leq (\mu(t) \fto R(t,t') \fand \mu'(t'))$} \label{eq: HJHSJ1}\\[1ex]
\!\!\!\!\!\!\!\!\!\!\!\!\!&& \textrm{for every $t' \in S$, there exists $t \in S$ such that $Z(\mu,\mu') \leq (\mu(t') \fto R(t,t') \fand \mu(t))$.} \label{eq: HJHSJ2}
\end{eqnarray}

Consider~\eqref{eq: HJHSJ1} and let $t \in S$. Without loss of generality, assume that $Z(\mu,\mu') > 0$ and $\mu(t) > 0$. 
Since $Z$ is a fuzzy bisimulation of $G$, there exists $t' \in V$ such that 
\begin{equation}\label{eq: JHDKA}
Z(\mu,\mu') \fand E(\mu, \varepsilon, t) \leq E(\mu', \varepsilon, t') \fand Z(t,t').
\end{equation}
Since $Z(\mu,\mu') > 0$ and $E(\mu, \varepsilon, t) = \mu(t) > 0$, we have $E(\mu', \varepsilon, t') > 0$, which implies $t' \in S$ and $Z(t,t') = R(t,t')$. 
Since $E(\mu, \varepsilon, t) = \mu(t)$ and $E(\mu', \varepsilon, t') = \mu'(t')$, it follows from~\eqref{eq: JHDKA} that  
\[ Z(\mu,\mu') \fand \mu(t) \leq \mu'(t') \fand R(t,t'), \]
which implies~\eqref{eq: HJHSJ1}. 
Analogously, it can be shown that \eqref{eq: HJHSJ2} also holds. 
Thus, we have proved that the condition~(a) of Definition~\eqref{def: HGLAK} holds. 
Similarly, it can be proved that the condition~(b) of Definition~\eqref{def: HGLAK} also holds. 
Therefore, $R$ is a fuzzy bisimulation of~$\mS$. 
\myend
\end{proof}

\begin{corollary}\label{cor: JHFKL}
Let $\mS = \tuple{S, \SA, \delta}$ be a finite \NFTS and $G = \tuple{V, E, L, \SV, \SE}$ the \FLG corresponding to $\mS$. If $Z$ is the greatest fuzzy bisimulation of $G$, then $R = Z|_{S \times S}$ is the greatest fuzzy bisimulation of~$\mS$.
\end{corollary}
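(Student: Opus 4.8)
The plan is to mirror the proof of Corollary~\ref{cor: KJDKW}, using Theorem~\ref{theorem: JHFKJ} in place of Theorem~\ref{theorem: HFLWA}. First, since $Z$ is a fuzzy bisimulation of $G$, part~2 of Theorem~\ref{theorem: JHFKJ} immediately gives that $R = Z|_{S \times S}$ is a fuzzy bisimulation of $\mS$. It then remains to show that $R$ dominates every fuzzy bisimulation of $\mS$, i.e., that $R' \leq R$ for every fuzzy bisimulation $R'$ of $\mS$.

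So I would take an arbitrary fuzzy bisimulation $R'$ of $\mS$ and apply part~1 of Theorem~\ref{theorem: JHFKJ} to it. This produces a fuzzy relation $Z'$ on $V$, specified by $Z'(s,s') = R'(s,s')$ for $s,s' \in S$, by $Z'(\mu,\mu') = (R')^\ddag(\mu,\mu')$ for $\mu,\mu' \in \deltatt$, and by $Z'(x,x') = 0$ for the remaining pairs, which is a fuzzy bisimulation of $G$. Since $Z$ is the greatest fuzzy bisimulation of $G$, we have $Z' \leq Z$; restricting to $S \times S$ yields $R'(s,s') = Z'(s,s') \leq Z(s,s') = R(s,s')$ for all $s,s' \in S$, that is, $R' \leq R$. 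Combined with the first step, this shows that $R$ is the greatest fuzzy bisimulation of $\mS$.

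There is essentially no obstacle here: the argument is a direct transfer of the extremality of $Z$ through the correspondence of Theorem~\ref{theorem: JHFKJ}, exactly as in Corollary~\ref{cor: KJDKW}. The only point worth checking is that $Z$, being the greatest fuzzy bisimulation of a finite \FLG under the G\"odel t-norm, indeed exists, so the statement is not vacuous; this is guaranteed by \cite[Corollary~5.3]{FBSML} since the G\"odel t-norm is continuous. One may additionally remark that, since the fuzzy bisimilarity of $\mS$ is already known to exist and to be a fuzzy equivalence relation \cite{DBLP:journals/tfs/QiaoZF23}, this corollary concretely identifies it as the $(S \times S)$-restriction of the fuzzy bisimilarity of~$G$.
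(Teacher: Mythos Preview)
Your proof is correct and is exactly the approach the paper takes: the paper states that this corollary follows from Theorem~\ref{theorem: JHFKJ} in the same way as Corollary~\ref{cor: KJDKW} follows from Theorem~\ref{theorem: HFLWA}, which is precisely the argument you have written out. Your additional remarks on the existence of $Z$ are a helpful clarification, though strictly speaking the corollary is stated conditionally.
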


This corollary follows from Theorem~\ref{theorem: JHFKJ} in the same way as Corollary~\ref{cor: KJDKW} follows from Theorem~\ref{theorem: HFLWA}.

%\section{Computing bisimulations between nondeterministic fuzzy transition systems}
\section{Computing the greatest crisp/fuzzy bisimulation of a finite \NFTS}
\label{section: computation}

We present Algorithm~\ref{algCompCPfNFTS} on page~\pageref{algCompCPfNFTS} (resp.\ Algorithm~\ref{algCompFPfNFTS} on page~\pageref{algCompFPfNFTS}) for computing the crisp (resp.\ compact fuzzy) partition that corresponds to the greatest crisp (resp.\ fuzzy) bisimulation of a given finite \NFTS. They are based on the results of the previous section and the algorithms given in \cite{DBLP:journals/ijar/NguyenT24,DBLP:journals/isci/Nguyen23}, which deal with computing bisimulations for \FLGs. 
We have implemented these algorithms in Python and made our implementation publicly available~\cite{NFTS-impl}.

\begin{algorithm}[t]
\caption{\CompCPfNFTS\label{algCompCPfNFTS}}
\Input{a finite \NFTS $\mS = \tuple{S, \SA, \delta}$.}
\Output{the partition corresponding to the greatest crisp bisimulation of~$\mS$.}

\BlankLine
construct the \FLG $G$ corresponding to $\mS$\label{step: algCompCPfNFTS 1}\;
execute the algorithm \CompCBt from \cite{DBLP:journals/ijar/NguyenT24} for $G$ to compute the partition $\bbP$ that corresponds to the greatest crisp bisimulation of~$G$\label{step: algCompCPfNFTS 2}\;
$\rs := \emptyset$\label{step: algCompCPfNFTS 3}\;
\ForEach{$B \in \bbP$}{
    let $x$ be any element of $B$\;
    \lIf{$x \in S$}{add $B$ to $\rs$\label{step: algCompCPfNFTS 6}}
}
\Return $\rs$\label{step: algCompCPfNFTS 7}\;
\end{algorithm}

\begin{example}
Consider the execution of Algorithm~\ref{algCompCPfNFTS} for the \NFTS $\mS$ given in Example~\ref{example: HGFJS}. 
The \FLG $G$ corresponding to $\mS$ has been specified in that example. 
Executing the algorithm \CompCBt from \cite{DBLP:journals/ijar/NguyenT24} for $G$ results in the partition 
$\bbP = \{\{s_1\}$, $\{s_2, s_5\}$, $\{s_3, s_4\}$, $\{\mu_1\}$, $\{\mu_2\}$, $\{\mu_3\}\}$. 
Executing the steps \ref{step: algCompCPfNFTS 3}--\ref{step: algCompCPfNFTS 7} of Algorithm~\ref{algCompCPfNFTS} results in the partition $\{\{s_1\}$, $\{s_2, s_5\}$, $\{s_3, s_4\}\}$. This can be checked by using our implementation~\cite{NFTS-impl}. When the implemented program is run with the option ``-{}-verbose'', it also displays $\bbP$ and information about intermediate steps of the algorithm \CompCBt.
\myend
\end{example}

\begin{theorem}
Algorithm~\ref{algCompCPfNFTS} is a correct algorithm for computing the partition corresponding to the greatest crisp bisimulation of a finite \NFTS $\mS = \tuple{S, \SA, \delta}$. It can be implemented to run in time of the order $O((\size(\delta) \log{l} + |S|) \log{(|S| + |\deltatt|)})$, where $l$ is the number of fuzzy values used in~$\mS$ plus~2. 
\end{theorem}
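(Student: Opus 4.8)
The plan is to treat the two assertions separately, reducing each to material already established in the excerpt.

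For \textbf{correctness}, I would argue as follows. Step~\ref{step: algCompCPfNFTS 1} builds the \FLG $G = \tuple{V, E, L, \SV, \SE}$ corresponding to $\mS$, and Step~\ref{step: algCompCPfNFTS 2} invokes \CompCBt from \cite{DBLP:journals/ijar/NguyenT24}, which returns the partition $\bbP$ of $V = S \cup \deltatt$ induced by the greatest crisp bisimulation $Z$ of $G$ (recall $Z$ exists and is an equivalence relation). By Corollary~\ref{cor: KJDKW}, $R = Z \cap (S \times S)$ is the greatest crisp bisimulation of $\mS$, so it suffices to show that Steps~\ref{step: algCompCPfNFTS 3}--\ref{step: algCompCPfNFTS 7} output exactly the partition of $S$ induced by $R$. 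The key point is that each block $B \in \bbP$ satisfies either $B \subseteq S$ or $B \subseteq \deltatt$: by condition~(a) of Definition~\ref{def: DSHGQ} any two $Z$-related vertices $x,x'$ have $L(x) = L(x')$, and by the definition of $L$ in Definition~\ref{def: JHFSO 1} we have $L(x)(s) = 1$ precisely when $x \in S$ and $L(x)(s) = 0$ precisely when $x \in \deltatt$. Hence ``$x \in S$ for one element $x$ of $B$'' is equivalent to ``$B \subseteq S$'', so the family $\{B \in \bbP \mid B \subseteq S\}$ collected by the loop is exactly the partition of $S$ determined by the restriction $R$ of $Z$ to $S \times S$; this is the required output.

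For the \textbf{running time}, I would assemble the cost estimates already recorded. Step~\ref{step: algCompCPfNFTS 1} costs \InitComplexity{} by Remark~\ref{remark: JHFJS}. By the Proposition following Definition~\ref{def: JHFSO 1}, $|V| = |S| + |\deltatt|$, $|\support(E)| = \size(\delta)$, and $\{E(e) : e \in \support(E)\}$ is the set of fuzzy values used in $\mS$ extended with $1$; therefore the parameter ``$l$'' of the \FLG algorithm of \cite{DBLP:journals/ijar/NguyenT24}, namely $|\{E(e) : e \in \support(E)\} \cup \{0,1\}|$, is at most the number of fuzzy values used in $\mS$ plus $2$, i.e.\ at most the $l$ of the present statement. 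Substituting $n = |S| + |\deltatt|$ and $m = \size(\delta)$ into the $O((m\log l + n)\log n)$ bound of \cite{DBLP:journals/ijar/NguyenT24} bounds Step~\ref{step: algCompCPfNFTS 2} by $O((\size(\delta)\log l + |S| + |\deltatt|)\log(|S|+|\deltatt|))$, and the loop in Steps~\ref{step: algCompCPfNFTS 3}--\ref{step: algCompCPfNFTS 7} runs in $O(|\bbP| + \sum_{B\in\bbP}|B|) = O(|V|)$.

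The only genuinely delicate bookkeeping step — the closest thing to an obstacle — is absorbing $|\deltatt|$ into the stated bound so that the additive term is $|S|$ rather than $|S| + |\deltatt|$. Here I would observe that every $\mu \in \deltatt$ occurs in at least one triple of $\delta$, so $|\deltatt| \leq |\delta| \leq \size(\delta)$, and that $l \geq 2$ (the number of fuzzy values used in $\mS$ is nonnegative), so $\size(\delta) \leq \size(\delta)\log l$. Consequently $\size(\delta)\log l + |S| + |\deltatt| = O(\size(\delta)\log l + |S|)$, and likewise the construction cost \InitComplexity{} and the $O(|V|)$ cost of the final loop are absorbed, yielding the claimed $O((\size(\delta)\log l + |S|)\log(|S|+|\deltatt|))$.
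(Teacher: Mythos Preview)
Your proof is correct and follows essentially the same approach as the paper's: correctness via Corollary~\ref{cor: KJDKW} plus the observation that the vertex label $s$ forces each $\bbP$-block to lie entirely in $S$ or entirely in $\deltatt$, and complexity by plugging $n = |S| + |\deltatt|$, $m = \size(\delta)$ into the bound from~\cite{DBLP:journals/ijar/NguyenT24}. If anything, you are more explicit than the paper in justifying the absorption of the additive $|\deltatt|$ term (via $|\deltatt| \leq \size(\delta)$ and $l \geq 2$), a step the paper's own proof leaves implicit.
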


Note that $|\deltatt| \leq |\delta|$ and the occurrence of $|\deltatt|$ in the above complexity order can be replaced by $|\delta|$. Also note that, when $|\delta| \geq |S|$, that complexity order is within $O(\size(\delta) \cdot \log{|\delta|} \cdot \log{l})$, $O(|S| \cdot |\delta| \cdot \log{|\delta|} \cdot \log{l})$ and $O(|S| \cdot |\delta| \cdot \log^2{|\delta|})$.

\begin{proof}
Let $G = \tuple{V, E, L, \SV, \SE}$ and $\bbP$ be the objects mentioned in Algorithm~\ref{algCompCPfNFTS} and let $Z$ be the greatest crisp bisimulation of $G$. Thus, $\bbP$ is the partition corresponding to the equivalence relation~$Z$. By Corollary~\ref{cor: KJDKW}, $R = Z \cap (S \times S)$ is the greatest crisp bisimulation of~$\mS$. By the definition of $L$, if $x Z x'$, then $x, x' \in S$ or $x,x' \notin S$. Hence, a block $B \in \bbP$ belongs to the partition corresponding to the equivalence relation $R$ iff some elements of $B$ belong to $S$. Therefore, the set $\rs$ computed by the steps~\ref{step: algCompCPfNFTS 3}--\ref{step: algCompCPfNFTS 6} of Algorithm~\ref{algCompCPfNFTS} is really the partition corresponding to the greatest crisp bisimulation~$R$ of~$\mS$. 

The case $|\delta| = 0$ is trivial. So, assume that $|\delta| > 0$. 
By Remark~\ref{remark: JHFJS}, the step~\ref{step: algCompCPfNFTS 1} of Algorithm~\ref{algCompCPfNFTS} can be done in time of the order \InitComplexity. 
By~\cite[Theorem~4.2]{DBLP:journals/ijar/NguyenT24}, the step~\ref{step: algCompCPfNFTS 2} can be done in time of the order $O((m \log{l} + n) \log{n})$, where $n = |V| = |S| + |\deltatt|$ and $m = |\support(E)| = \size(\delta)$. 
The steps~\ref{step: algCompCPfNFTS 3}--\ref{step: algCompCPfNFTS 6} can be done in time of the order $O(n)$. 
Summing up, Algorithm~\ref{algCompCPfNFTS} can be implemented to run in time of the order $O((\size(\delta) \log{l} + |S|) \log{(|S| + |\deltatt|)})$. 
\myend
\end{proof}

Given $B$ as the compact fuzzy partition of a fuzzy equivalence relation, by $B.\anyElem()$ we denote any element of $B$. This method can be implemented as follows: if $B$ is a crisp block, then return any element of the set $B.\elements$; else let $B'$ be any element of the set $B.\subblocks$ and return $B'.\anyElem()$. 
Similarly, by $B.\allElems()$ we denote the (crisp) set of all elements of~$B$. This method is used in Algorithm~\ref{algCompFPfNFTS} and can be implemented as follows: if $B$ is a crisp block, then return $B.\elements$; else return the union of all the sets $B'.\allElems()$ with $B' \in B.\subblocks$. 

\begin{algorithm}[t]
\caption{\CompFPfNFTS\label{algCompFPfNFTS}}
\Input{a finite \NFTS $\mS = \tuple{S, \SA, \delta}$.}
\Output{the compact fuzzy partition corresponding to the greatest fuzzy bisimulation of~$\mS$ w.r.t.\ the G\"odel semantics.}

\BlankLine
construct the \FLG $G$ corresponding to $\mS$\label{step: algCompFPfNFTS 2}\;
execute the algorithm \CompFPt from \cite{DBLP:journals/isci/Nguyen23} for $G$ to compute the compact fuzzy partition $\bbB$ that corresponds to the greatest fuzzy bisimulation of~$G$ w.r.t.\ the G\"odel semantics\label{step: algCompFPfNFTS 3}\;
\lIf{$\delta = \emptyset$}{\Return $\bbB$\label{step: algCompFPfNFTS x}}
$P := \emptyset$\label{step: algCompFPfNFTS 4}\;
\ForEach{$B \in \bbB.\subblocks$}{
    \lIf{$B.\anyElem() \in S$}{add $B$ to the set $P$}
}
\lIf{$P$ contains only one element}{\Return that element}
\lElse{\Return the fuzzy block $B$ with $B.\degree = 0$ and $B.\subblocks = P$\label{step: algCompFPfNFTS 8}}
\end{algorithm}

\begin{example}
Consider the execution of Algorithm~\ref{algCompFPfNFTS} for the \NFTS $\mS$ given in Example~\ref{example: HGFJS}. 
The \FLG $G$ corresponding to $\mS$ has been specified in that example. 
Executing the algorithm \CompFPt from~\cite{DBLP:journals/isci/Nguyen23} for $G$ results in the compact fuzzy partition 
$\bbB = 
\{
  \{
    \{s_1\}_1, 
    \{s_2, s_5\}_1 
  \}_{0.4}$,
  $\{s_3, s_4\}_1$, 
  $\{
    \{
      \{\mu_1\}_1$, 
      $\{\mu_3\}_1
    \}_{0.5}$, 
    $\{\mu_2\}_1 
  \}_{0.4} 
\}_0$. 
Executing the steps \ref{step: algCompFPfNFTS 4}--\ref{step: algCompFPfNFTS 8} of Algorithm~\ref{algCompFPfNFTS} results in the compact fuzzy partition 
$\{
  \{
    \{s_1\}_1, 
    \{s_2, s_5\}_1 
  \}_{0.4}$,
  $\{s_3, s_4\}_1
\}_0$.  
This can be checked by using our implementation~\cite{NFTS-impl}. When the implemented program is run with the option ``-{}-verbose'', it also displays $\bbB$ and information about intermediate steps of the algorithm \CompFPt.
The fuzzy equivalence relation corresponding to the resultant compact fuzzy partition is given below. 
\[
\begin{array}{|c||c|c|c|c|c|}
\hline
 & s_1 & s_2 & s_3 & s_4 & s_5 \\
\hline\hline
s_1 & 1 & 0.4 & 0 & 0 & 0.4 \\
\hline
s_2 & 0.4 & 1 & 0 & 0 & 1 \\
\hline
s_3 & 0 & 0 & 1 & 1 & 0 \\
\hline
s_4 & 0 & 0 & 1 & 1 & 0 \\
\hline
s_5 & 0.4 & 1 & 0 & 0 & 1 \\
\hline
\end{array}
\]
It is the greatest fuzzy bisimulation of~$\mS$ w.r.t.\ the G\"odel semantics. 
\myend
\end{example}

\begin{theorem}
Algorithm~\ref{algCompFPfNFTS} is a correct algorithm for computing the compact fuzzy partition corresponding to the greatest fuzzy bisimulation of a finite \NFTS $\mS = \tuple{S, \SA, \delta}$ w.r.t.\ the G\"odel semantics. It can be implemented to run in time of the order $O((\size(\delta) \log{l} + |S|) \log{(|S| + |\deltatt|)})$, where $l$ is the number of fuzzy values used in~$\mS$ plus~2. 
\end{theorem}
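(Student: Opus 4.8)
The plan is to follow the same scheme as the proof of the theorem for Algorithm~\ref{algCompCPfNFTS}, replacing the crisp transformation results by their fuzzy counterparts. Write $G = \tuple{V, E, L, \SV, \SE}$ for the \FLG corresponding to $\mS$, let $Z$ be the greatest fuzzy bisimulation of $G$ w.r.t.\ the G\"odel semantics, and let $\bbB$ be the compact fuzzy partition corresponding to $Z$, which is what step~\ref{step: algCompFPfNFTS 3} computes. By Corollary~\ref{cor: JHFKL}, $R := Z|_{S \times S}$ is the greatest fuzzy bisimulation of $\mS$, so proving correctness amounts to showing that steps~\ref{step: algCompFPfNFTS x}--\ref{step: algCompFPfNFTS 8} return the compact fuzzy partition corresponding to $R$.

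For correctness I would first dispatch the degenerate case $\delta = \emptyset$: then $\deltatt = \emptyset$, so $V = S$, $E \equiv 0$ and $L$ is constant; hence $Z$ is the all-one relation on $S$, $\bbB$ is the crisp block $S_1$, and (the conditions of Definition~\ref{def: HGLAK} being vacuous) $R$ is likewise the all-one relation on $S$, whose compact fuzzy partition is $S_1$; step~\ref{step: algCompFPfNFTS x} therefore returns the correct value, and this case genuinely needs separate treatment because a crisp $\bbB$ has no $\subblocks$ attribute. When $\delta \neq \emptyset$, the key observation is that condition~\eqref{eq: FB1} of Definition~\ref{def: HGAKF}, instantiated at the unique vertex label $s$ and combined with the definition of $L$ in Definition~\ref{def: JHFSO 1}, forces $Z(x,x') = 0$ whenever exactly one of $x,x'$ lies in $S$, since $(1 \fequiv 0) = 0$ under the G\"odel semantics. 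As $S$ and $\deltatt$ are both non-empty, this gives $\bigwedge_{x,x' \in V} Z(x,x') = 0$, so by Definition~\ref{def: HFJAA} $\bbB$ is a fuzzy block with $\bbB.\degree = 0$ whose sub-blocks correspond to the classes of $\{\tuple{x,x'} : Z(x,x') > 0\}$, and none of these classes straddles $S$ and $\deltatt$. Hence the test $B.\anyElem() \in S$ correctly identifies the sub-blocks lying entirely in $S$, so the set $P$ built by the algorithm is precisely the family of compact fuzzy partitions of $R$ on the classes of $\{\tuple{s,s'} : R(s,s') > 0\}$ (using $Z|_{S \times S} = R$).

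It then remains to reconcile steps~\ref{step: algCompFPfNFTS x}--\ref{step: algCompFPfNFTS 8} with Definition~\ref{def: HFJAA} applied to $R$, by a case split on $|P|$. If $|P| = 1$, then $R(s,s') > 0$ for all $s,s' \in S$, so the single member of $P$ is the compact fuzzy partition of $Z|_{S \times S} = R$ itself (whether $R$ is all-one, giving a crisp block, or has a positive infimum, giving a fuzzy block), and the algorithm returns it. If $|P| \geq 2$, then some pair $s,s'$ has $R(s,s') = 0$, whence $\bigwedge_{s,s' \in S} R(s,s') = 0$ by finiteness of $S$, and Definition~\ref{def: HFJAA} makes the compact fuzzy partition of $R$ the fuzzy block of degree $0$ with sub-blocks $P$, which is exactly what step~\ref{step: algCompFPfNFTS 8} produces. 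I expect this case split to be the only delicate part of the argument: one has to see that $|P| \geq 2$ is equivalent to the top degree of $R$ being $0$ (so that re-wrapping with degree $0$ is legitimate) and that the $|P| = 1$ branch already yields the full compact fuzzy partition of $R$ without further processing.

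For the running time, the case $|\delta| = 0$ is immediate (time $O(|S|)$), so assume $|\delta| > 0$. Constructing $G$ at step~\ref{step: algCompFPfNFTS 2} costs \InitComplexity{} by Remark~\ref{remark: JHFJS}. Step~\ref{step: algCompFPfNFTS 3} runs in time $O((m \log{l'} + n) \log{n})$ by the complexity bound for \CompFPt from~\cite{DBLP:journals/isci/Nguyen23}, where $n = |V| = |S| + |\deltatt|$, $m = |\support(E)| = \size(\delta)$ and $l' = |\{E(e) : e \in \support(E)\} \cup \{0,1\}|$; by the proposition following Definition~\ref{def: JHFSO 1}, $\{E(e) : e \in \support(E)\}$ is the set of fuzzy values used in $\mS$ extended with $1$, so $l' \leq l$. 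Steps~\ref{step: algCompFPfNFTS x}--\ref{step: algCompFPfNFTS 8} only traverse $\bbB.\subblocks$ and perform a few $\anyElem()$ calls and set insertions, which costs $O(n)$ since the compact fuzzy partition uses linear space, and this is dominated by the previous terms. Summing up yields the claimed order $O((\size(\delta) \log{l} + |S|) \log{(|S| + |\deltatt|)})$.
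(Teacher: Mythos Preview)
Your proposal is correct and follows essentially the same approach as the paper's proof: invoke Corollary~\ref{cor: JHFKL} to identify $R = Z|_{S\times S}$ as the greatest fuzzy bisimulation of~$\mS$, use condition~\eqref{eq: FB1} with the label~$s$ to show that $Z(x,x') > 0$ forces $x,x'$ both in $S$ or both in $\deltatt$, deduce that $\bbB$ has degree~$0$ with each sub-block lying entirely in $S$ or in~$\deltatt$, and then match the two return branches against Definition~\ref{def: HFJAA}. Your treatment is in fact more explicit than the paper's in justifying the $\delta = \emptyset$ case, the case split on~$|P|$, and the inequality $l' \leq l$ for the complexity bound.
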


As stated for Algorithm~\ref{algCompCPfNFTS}, the occurrence of $|\deltatt|$ in the above complexity order can be replaced by $|\delta|$. Also note that, when $|\delta| \geq |S|$, that complexity order is within $O(\size(\delta) \cdot \log{|\delta|} \cdot \log{l})$, $O(|S| \cdot |\delta| \cdot \log{|\delta|} \cdot \log{l})$ and $O(|S| \cdot |\delta| \cdot \log^2{|\delta|})$.

\begin{proof}
For the theorem and this proof, $\fand$ is assumed to be the G\"odel t-norm. 
Let $G = \tuple{V, E, L, \SV, \SE}$ and $\bbB$ be the objects mentioned in Algorithm~\ref{algCompFPfNFTS} and let $Z$ be the greatest fuzzy bisimulation of~$G$. Thus, $\bbB$ is the compact fuzzy partition corresponding to the fuzzy equivalence relation~$Z$. By Corollary~\ref{cor: JHFKL}, $Z|_{S \times S}$ is the greatest fuzzy bisimulation of~$\mS$.
The case $\delta = \emptyset$ is clear. So, assume that $\delta \neq \emptyset$. 
By the definition of $L$, for $x,x' \in V$, if $Z(x, x') > 0$, then $x, x' \in S$ or $x,x' \notin S$. 
Since $V = S \cup \deltatt$ and $\delta \neq \emptyset$, $\bbB$ must be a fuzzy block with $\bbB.\degree = 0$, and for any $B \in \bbB.\subblocks$, either $B.\allElems() \subseteq S$ or $B.\allElems() \subseteq \deltatt$. 
If $B$ is a unique block from $\bbB.\subblocks$ with $B.\anyElem() \in S$, then $B$ is the compact fuzzy partition corresponding to the fuzzy equivalence relation~$Z|_{S \times S}$. If $B_1, \ldots, B_k$ are all the blocks from $\bbB.\subblocks$ with $B_i.\anyElem() \in S$, for $1 \leq i \leq k$, and $k > 1$, then $\{B_1, \ldots, B_k\}_0$ is the compact fuzzy partition corresponding to the fuzzy equivalence relation~$Z|_{S \times S}$. Hence, by the steps~\ref{step: algCompFPfNFTS 4}--\ref{step: algCompFPfNFTS 8}, Algorithm~\ref{algCompFPfNFTS} returns the compact fuzzy partition corresponding to the greatest fuzzy bisimulation of~$\mS$.

By Remark~\ref{remark: JHFJS}, the step~\ref{step: algCompFPfNFTS 2} can be done in time of the order \InitComplexity. 
By~\cite[Theorem~4.12]{DBLP:journals/isci/Nguyen23}, the step~\ref{step: algCompFPfNFTS 3} can be done in time of the order $O((m \log{l} + n) \log{n})$, where $n = |V| = |S| + |\deltatt|$ and $m = |\support(E)| = \size(\delta)$. 
The step~\ref{step:
 algCompFPfNFTS x} of Algorithm~\ref{algCompFPfNFTS} runs in constant time. 
The steps~\ref{step: algCompFPfNFTS 4}--\ref{step: algCompFPfNFTS 8} run in time of the order $O(|V|) = O(|S| + |\deltatt|)$. 
Summing up, Algorithm~\ref{algCompFPfNFTS} can be implemented to run in time of the order $O((\size(\delta) \log{l} + |S|) \log{(|S| + |\deltatt|)})$. 
\myend
\end{proof}

The work~\cite{DBLP:journals/isci/Nguyen23} provides a function named {\em ConvertFP2FB} for converting a compact fuzzy partition of a finite set $S$ to the corresponding fuzzy equivalence relation (when $\fand$ is the G\"odel t-norm). Its time complexity is of order $O(|S|^2)$. We do not need to explicitly keep the greatest fuzzy bisimulation $Z$ of a finite \NFTS $\mS = \tuple{S, \SA, \delta}$ with that cost. A compact fuzzy partition is implemented in~\cite{DBLP:journals/isci/Nguyen23} as a tree, where each node has a reference to its parent. Given $x, y \in S$ and the compact fuzzy partition $B$ returned by Algorithm~\ref{algCompFPfNFTS} for $\mS$, computing $Z(x,y)$ is reduced to the task of finding the lowest common ancestor of the leaves of the tree representing $B$ that contain $x$ and $y$, respectively. This latter task can be done efficiently by using the algorithm of Harel and Tarjan~\cite{DBLP:journals/siamcomp/HarelT84}. 

%===============================================================================
%\section{Simulations and bisimulations between nondeterministic fuzzy labeled transition systems}
\section{Extending \NFTSs with fuzzy state labels}
\label{section: extension}

We define a {\em nondeterministic fuzzy labeled transition system} (\NFLTS) as an extension of an \NFTS in which each state is labeled by a fuzzy subset of an alphabet~$\Sigma$. In particular, an \NFLTS is a structure \mbox{$\mS = \tuple{S, \SA, \delta, \sS, L}$}, where $S$, $\SA$ and $\delta$ are as for an \NFTS, $\sS$ is a set of state labels, and $L: S \to \mF(\sS)$ is the state labeling function. It is {\em finite} if all the components $S$, $\SA$, $\delta$ and $\sS$ are finite. 

In this section, we first define the notions of a crisp/fuzzy auto-bisimulation of an \NFLTS and prove that Algorithms~\ref{algCompCPfNFTS} and~\ref{algCompFPfNFTS} are still correct when taking a finite \NFLTS as the input instead of a finite \NFTS. We then define four notions of a crisp/fuzzy simulation/bisimulation between two \NFLTSs and state what existing results on logical and algorithmic characterizations of simulations/bisimulations for fuzzy structures of other kinds can be reformulated for \NFLTSs. 
In particular, we present efficient algorithms for computing the greatest crisp (resp.\ fuzzy) simulation between two finite \NFLTSs (under the G\"odel semantics in the case of fuzzy simulation). 

We proceed by extending the notion of the corresponding \FLG for \NFLTSs appropriately, preserving the state labeling function. In particular, the definition given below differs from Definition~\ref{def: JHFSO 1} only in the specification of~$\SV$ and~$L$. 

\begin{definition}\label{def: JHFSO 2}
Given an \NFLTS $\mS = \tuple{S, \SA, \delta, \sS, L_0}$, the {\em \FLG corresponding to $\mS$} is the \FLG $G = \tuple{V, E, L, \SV, \SE}$ specified as follows:\footnote{In this definition, $L_0$ is the state labeling function of~$\mS$, whereas $L$ is the vertex labeling function of~$G$.}
\begin{itemize}
\item $V$, $\SE$ and $E$ are as in Definition~\ref{def: JHFSO 1};
\item $\SV = \sS \cup \{s\}$, where $s \notin \sS$ stands for ``being a state''; 
\item $L: V \to \mF(\SV)$ is specified by: 
    \begin{itemize}
    \item $L(x)|_{\Sigma} = L_0(x)$ and $L(x)(s) = 1$ for $x \in S$, 
    \item $\support(L(x)) = \emptyset$ for $x \in \deltatt$.
\myend
    \end{itemize}
\end{itemize}
\end{definition}

In the spirit of Theorems~\ref{theorem: HFLWA} and~\ref{theorem: JHFKJ}, we define bisimulations for \NFLTSs as follows. 

\begin{definition}
Let $\mS = \tuple{S, \SA, \delta, \sS, L}$ be an \NFLTS and $G$ its corresponding \FLG. 
A relation $R \subseteq S \times S$ is called a {\em crisp bisimulation} of $\mS$ if there exists a crisp bisimulation $Z$ of $G$ such that $R = Z \cap (S \times S)$. A fuzzy relation $R \in \mF(S \times S)$ is called a {\em fuzzy bisimulation} of $\mS$ if there exists a fuzzy bisimulation $Z$ of $G$ such that $R = Z|_{S \times S}$.  
\myend
\end{definition}

The following result is a consequence of this definition. 

\begin{proposition}\label{prop: HGRJK}
Taking a finite \NFLTS $\mS$ as the input instead of a finite \NFTS, 
Algorithm~\ref{algCompCPfNFTS} is a correct algorithm for computing the partition corresponding to the greatest crisp bisimulation of~$\mS$, and 
Algorithm~\ref{algCompFPfNFTS} is a correct algorithm for computing the compact fuzzy partition corresponding to the greatest fuzzy bisimulation of~$\mS$ under the G\"odel semantics. 
\end{proposition}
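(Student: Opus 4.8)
The plan is to reuse, almost verbatim, the correctness arguments behind the two theorems of Section~\ref{section: computation}, with Definition~\ref{def: JHFSO 2} in place of Definition~\ref{def: JHFSO 1}. The essential simplification compared with the \NFTS case is that, for an \NFLTS, a crisp (resp.\ fuzzy) bisimulation is \emph{by definition} the $S\times S$-restriction of a crisp (resp.\ fuzzy) bisimulation of the corresponding \FLG $G$; hence no analogue of Theorems~\ref{theorem: HFLWA} and~\ref{theorem: JHFKJ} has to be proved, since the transfer between $\mS$ and $G$ is built into the definition.

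First I would establish the ``greatest bisimulation'' transfer. Let $\mS = \tuple{S, \SA, \delta, \sS, L_0}$ and let $G = \tuple{V, E, L, \SV, \SE}$ be its corresponding \FLG (Definition~\ref{def: JHFSO 2}). The greatest crisp bisimulation $Z$ of $G$ exists and is an equivalence relation by~\cite[Corollary~2.2]{DBLP:journals/ijar/NguyenT24}; since the G\"odel t-norm is continuous, the greatest fuzzy bisimulation $Z$ of the finite \FLG $G$ exists and is a fuzzy equivalence relation by~\cite[Corollary~5.3]{FBSML}. In the crisp case, $Z \cap (S \times S)$ is a crisp bisimulation of $\mS$ directly by definition, and if $R = Z' \cap (S \times S)$ is an arbitrary crisp bisimulation of $\mS$ (with $Z'$ a crisp bisimulation of $G$) then $Z' \subseteq Z$, so $R \subseteq Z \cap (S \times S)$; thus $Z \cap (S \times S)$ is the greatest crisp bisimulation of $\mS$. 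The fuzzy case is identical with $\leq$ replacing $\subseteq$, giving that $Z|_{S\times S}$ is the greatest fuzzy bisimulation of $\mS$.

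Next I would check that the post-processing of the two algorithms still extracts the right object. The only feature of the vertex labeling exploited in the \NFTS proofs is that $L(x)(s) = 1$ for $x \in S$ and $L(x)(s) = 0$ for $x \in \deltatt$; both remain valid under Definition~\ref{def: JHFSO 2}, since $\support(L(x)) = \emptyset$ for $x \in \deltatt$. Consequently, condition~(a) of Definition~\ref{def: DSHGQ} (resp.\ condition~\eqref{eq: FB1} of Definition~\ref{def: HGAKF} with $p = s$, using $(0 \fequiv 1) = 0$) forces, for $\tuple{x,x'} \in Z$ (resp.\ whenever $Z(x,x') > 0$), that $x,x' \in S$ or $x,x' \in \deltatt$; in particular $Z(s,\mu) = 0$ for all $s \in S$ and $\mu \in \deltatt$. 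From here the two cases run exactly as in the \NFTS proofs: in the crisp case every block of the partition $\bbP$ computed at step~\ref{step: algCompCPfNFTS 2} lies entirely in $S$ or entirely in $\deltatt$, and the blocks lying in $S$ form precisely the partition corresponding to $Z \cap (S \times S)$, which is what steps~\ref{step: algCompCPfNFTS 3}--\ref{step: algCompCPfNFTS 7} return; in the fuzzy case, if $\delta = \emptyset$ then $V = S$ and $\bbB$ is already the compact fuzzy partition of $Z|_{S\times S}$ (returned at step~\ref{step: algCompFPfNFTS x}), while if $\delta \neq \emptyset$ then $S$ and $\deltatt$ are both nonempty and $\bbB$ is a fuzzy block with $\bbB.\degree = 0$ whose sub-blocks each lie entirely in $S$ or entirely in $\deltatt$, so by Definition~\ref{def: HFJAA} the sub-blocks with elements in $S$ are (a single one of them, or else their set wrapped in a degree-$0$ fuzzy block) the compact fuzzy partition corresponding to $Z|_{S\times S}$, exactly what the remaining steps of Algorithm~\ref{algCompFPfNFTS} return.

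The only delicate point, and the main (rather mild) obstacle, is to verify that the arguments imported from the \NFTS proofs genuinely never touch the extra vertex labels coming from $\sS$: the $\sS$-component of $L$ only further refines $Z$ within the $S$-part of $V$ and never relates an $S$-vertex to a $\deltatt$-vertex, so the ``blocks respect the split $\{S, \deltatt\}$'' reasoning and the extraction steps are unaffected. Since the \FLG algorithms of~\cite{DBLP:journals/ijar/NguyenT24,DBLP:journals/isci/Nguyen23} already handle arbitrary vertex labels, no further changes, and no new complexity analysis, are needed.
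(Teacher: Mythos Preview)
Your proposal is correct and follows essentially the same approach as the paper's proof: use the label $s$ to ensure blocks respect the split $\{S,\deltatt\}$, then observe that the post-processing steps extract exactly the partition (resp.\ compact fuzzy partition) corresponding to $Z \cap (S\times S)$ (resp.\ $Z|_{S\times S}$), which is the greatest crisp (resp.\ fuzzy) bisimulation of~$\mS$ by definition. Your version is in fact more explicit than the paper's on one point: you spell out why the restriction of the greatest \FLG bisimulation is the \emph{greatest} bisimulation of~$\mS$ (via $Z' \subseteq Z \Rightarrow Z'\cap(S\times S) \subseteq Z\cap(S\times S)$), whereas the paper compresses this into the parenthetical ``(by definition)''.
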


\begin{proof}
Let $\mS = \tuple{S, \SA, \delta, \sS, L_0}$ and let $G = \tuple{V, E, L, \SV, \SE}$ be the \FLG corresponding to $\mS$. 

Consider the case of Algorithm~\ref{algCompCPfNFTS}. Let $\bbP$ be the object mentioned in Algorithm~\ref{algCompCPfNFTS} and $Z$ the greatest crisp bisimulation of $G$. Thus, $\bbP$ is the partition corresponding to the equivalence relation~$Z$. Due to the use of $s \in \SV$ and by the definition of $L$, if $x Z x'$, then $x, x' \in S$ or $x,x' \notin S$. Hence, the set $\rs$ computed by the steps~\ref{step: algCompCPfNFTS 3}--\ref{step: algCompCPfNFTS 6} of Algorithm~\ref{algCompCPfNFTS} is the partition corresponding to the equivalence relation~$Z \cap (S \times S)$, which is the greatest crisp bisimulation of~$\mS$ (by definition). 

Consider the case of Algorithm~\ref{algCompFPfNFTS}. Let $\bbB$ be the object mentioned in Algorithm~\ref{algCompFPfNFTS} and $Z$ the greatest fuzzy bisimulation of~$G$ w.r.t.\ the G\"odel semantics. Thus, $\bbB$ is the compact fuzzy partition corresponding to the fuzzy equivalence relation~$Z$. 
Due to the use of $s \in \SV$ and by the definition of $L$, if $Z(x, x') > 0$, then $x, x' \in S$ or $x,x' \notin S$. 
Hence, by the steps~\ref{step: algCompFPfNFTS x}--\ref{step: algCompFPfNFTS 8}, Algorithm~\ref{algCompFPfNFTS} returns the compact fuzzy partition corresponding to the fuzzy equivalence relation~$Z|_{S \times S}$, which is the greatest fuzzy bisimulation of~$\mS$ (by definition).
\myend
\end{proof}

\begin{definition}\label{def: HGDKA}
Let $G = \tuple{V, E, L, \SV, \SE}$ and $G' = \tuple{V', E', L', \SV, \SE}$ be \FLGs (over the same signature $\tuple{\SV,\SE}$). 
A relation $Z \subseteq V \times V'$ is called a {\em crisp simulation} between $G$ and $G'$ if the following conditions hold for every $\tuple{x,x'} \in Z$ and $r \in \SE$: 
\begin{itemize}
\item $L(x) \leq L(x')$, 
\item for every $y \in V$ with \mbox{$E(x,r,y) > 0$}, there exists $y'\in V'$ such that $y Z y'$ and \mbox{$E(x,r,y) \leq E(x',r,y')$}. 
\end{itemize}
A relation $Z \subseteq V \times V'$ is called a {\em crisp bisimulation} between $G$ and $G'$ if: $Z$ is a crisp simulation between $G$ and $G'$, and $Z^{-1}$ is a crisp simulation between $G'$ and $G$.
\myend
\end{definition}

The above definition is consistent with Definition~\ref{def: DSHGQ} when $Z \neq \emptyset$. That is, a non-empty relation $Z$ is a crisp bisimulation of $G$ iff it is a crisp bisimulation between $G$ and itself. The condition on non-emptiness is just a technical matter: there always exists a (non-empty) crisp bisimulation of a \FLG $G$, but it is possible that there is only one crisp bisimulation between \FLGs $G$ and $G'$ and it is the empty relation. 
In general, Definition~\ref{def: DSHGQ} can be loosened by discarding the condition on non-emptiness.

\begin{definition}\label{def: JRKAF}
Let $G = \tuple{V, E, L, \SV, \SE}$ and $G' = \tuple{V', E', L', \SV, \SE}$ be \FLGs (over the same signature $\tuple{\SV,\SE}$). 
A fuzzy relation $Z \in \mF(V \times V')$ is called a {\em fuzzy simulation} between $G$ and $G'$ (w.r.t.~$\fand$) if the following conditions hold for every $x,y \in V$, $x' \in V'$, $p \in \SV$ and $r \in \SE$:
\begin{itemize}
\item $Z(x,x') \leq (L(x)(p) \fto L(x')(p))$ %\label{eq: FS1}
\item $\E y' \in V'\ (Z(x,x') \fand E(x,r,y) \leq E(x',r,y') \fand Z(y,y')).$ %\label{eq: FS2}
\end{itemize}
A fuzzy relation $Z \in \mF(V \times V')$ is called a {\em fuzzy bisimulation} between $G$ and $G'$ if: $Z$ is a fuzzy simulation between $G$ and $G'$, and $Z^{-1}$ is a fuzzy simulation between $G'$ and $G$. 
\myend
\end{definition}

The above definition is consistent with Definition~\ref{def: HGAKF}. That is, a fuzzy relation $Z$ is a fuzzy bisimulation of $G$ iff it is a fuzzy bisimulation between $G$ and itself. 

In the spirit of Theorems~\ref{theorem: HFLWA} and~\ref{theorem: JHFKJ}, we define crisp/fuzzy simulations/bisimulations between \NFLTSs as follows. 

\begin{definition}\label{def: JHFKX}
Let $\mS = \tuple{S, \SA, \delta, \sS, L}$ and $\mS' = \tuple{S', \SA, \delta', \sS, L'}$ be \NFLTSs (over the same signature $\tuple{A,\sS}$). Let $G$ and $G'$ be the \FLGs corresponding to $\mS$ and $\mS'$, respectively. 
A relation $R \subseteq S \times S'$ is called a {\em crisp simulation} (resp.\ {\em crisp bisimulation}) between $\mS$ and $\mS'$ if there exists a crisp simulation (resp.\ crisp bisimulation) $Z$ between $G$ and $G'$ such that $R = Z \cap (S \times S')$. 
A fuzzy relation $R \in \mF(S \times S')$ is called a {\em fuzzy simulation} (resp.\ {\em fuzzy bisimulation}) between $\mS$ and $\mS'$ if there exists a fuzzy simulation (resp.\ fuzzy bisimulation) $Z$ between $G$ and $G'$ such that $R = Z|_{S \times S'}$.  
\myend
\end{definition}

Note that our notion of a crisp (resp.\ fuzzy) simulation when restricted (from \NFLTSs) to \NFTSs is different in nature from the one defined in~\cite{DBLP:journals/fss/WuCBD18} (resp.\ \cite{DBLP:journals/tfs/QiaoZF23}). In particular, our notions of a crisp/fuzzy simulation take into account only the ``forward'' direction, while the notions of a crisp/fuzzy simulation defined in \cite{DBLP:journals/fss/WuCBD18,DBLP:journals/tfs/QiaoZF23} take into account a mixture of the ``forward'' direction for the distribution level and both the ``forward'' and ``backward'' directions for the lifting level (expressed by~\eqref{eq: lifted relation} and \eqref{eq: JHFKJ}). The former ones relate to the preservation of the existential fragments of modal logics. In addition, the use of ``$\leq$'' instead of ``$=$'' in the condition~(a) of Definition~\ref{def: HGDKA} and the use of ``$\fto$'' instead of ``$\fequiv$'' in the condition~(a) of Definition~\ref{def: JRKAF} relate to the preservation of the positive fragments of modal logics. Together, our notions of crisp/fuzzy simulations relate to the preservation of the positive existential fragments of modal logics~\cite{BRV2001}. 

Each \FLG can be treated as a fuzzy Kripke model, a fuzzy interpretation in description logic or a fuzzy labeled transition system (\FLTS). In accordance with Definition~\ref{def: JHFKX}, known results on logical characterizations of crisp/fuzzy bisimulations/simulations in fuzzy modal/description logics or between \FLTSs can be applied to \NFLTSs. Notable are the following. 
\begin{itemize}
\item The logical characterizations of crisp bisimulations that are formulated and proved for fuzzy description logics in~\cite{DBLP:journals/tfs/NguyenN23} can be restated for \NFLTSs by defining semantics of concepts directly using an \NFLTS instead of the corresponding \FLG treated as an interpretation in description logic. 

\item The logical characterizations of fuzzy bisimulations (respectively, fuzzy simulations) that are formulated and proved for fuzzy modal logics in~\cite{FBSML} (respectively, \cite{DBLP:journals/cas/NguyenN22}) can be restated for \NFLTSs by defining semantics of modal formulas directly using an \NFLTS instead of the corresponding \FLG treated as a Kripke model. 

\item The logical characterizations of crisp simulations that are formulated and proved for \FLTSs in~\cite{DBLP:conf/fuzzIEEE/NguyenN21} can be restated for \NFLTSs by defining semantics of modal formulas directly using an \NFLTS instead of the corresponding \FLG treated as an \FLTS.
\end{itemize}

Clearly, one can also extend the logical characterizations of crisp (respectively, fuzzy) bisimulations formulated for \NFTSs in~\cite{DBLP:journals/fss/WuCBD18} (respectively, \cite{DBLP:journals/tfs/QiaoZF23}) to deal with \NFLTSs.

Computation of the greatest crisp/fuzzy bisimulation between two finite \NFLTSs $\mS$ and $\mS'$ (under the G\"odel semantics in the case of fuzzy bisimulation) can be reduced to the task of computing the greatest crisp/fuzzy bisimulation of the \NFLTS being the disjoint union of $\mS$ and $\mS'$, in the way stated in \cite[Section~5]{DBLP:journals/isci/Nguyen23} and using Algorithms~\ref{algCompCPfNFTS} and~\ref{algCompFPfNFTS} for \NFLTSs as stated in Proposition~\ref{prop: HGRJK}. 
Once again, we do not need to explicitly transform the resultant crisp (resp.\ compact fuzzy) partition to the corresponding crisp (resp.\ fuzzy) bisimulation, but can use the algorithm of Harel and Tarjan~\cite{DBLP:journals/siamcomp/HarelT84} instead.

The algorithm {\em ComputeSimulationEfficiently} provided in~\cite{DBLP:journals/jifs/Nguyen22} for computing the greatest crisp simulation between two finite \FLTSs can be used to produce an efficient algorithm for computing the greatest crisp simulation between two finite \NFLTSs as follows. 

\medskip

\begin{algorithm}[H]
\caption{$\mathsf{ComputeCrispSimulationNFLTS}$\label{algCompCS-NFLTS}}
\Input{finite \NFLTSs $\mS$ and $\mS'$.}
\Output{the greatest crisp simulation between $\mS$ and $\mS'$.}
\BlankLine
\label{step: algCompCS-NFLTS 1} construct the \FLGs $G$ and $G'$ that correspond to~$\mS$ and~$\mS'$, respectively\; 
\label{step: algCompCS-NFLTS 2} treating these \FLGs as \FLTSs (in the usual way), apply the algorithm {\em ComputeSimulationEfficiently} given in~\cite{DBLP:journals/jifs/Nguyen22} to compute the greatest crisp simulation $Z$ between $G$ and $G'$\; 
\label{step: algCompCS-NFLTS 3} \Return $Z \cap (S \times S')$\;
\end{algorithm}

\medskip

The algorithm {\em ComputeFuzzySimulation} provided in~\cite{TFS2020} for computing the greatest fuzzy simulation between two finite fuzzy interpretations in the fuzzy description logic \fALC under the G\"odel semantics can be used to produce an efficient algorithm for computing the greatest fuzzy simulation between two finite \NFLTSs as follows for the case where $\fand$ is the G\"odel t-norm.

\medskip

\begin{algorithm}[H]
\caption{$\mathsf{ComputeFuzzySimulationNFLTS}$\label{algCompFS-NFLTS}}
\Input{finite \NFLTSs $\mS$ and $\mS'$.}
\Output{the greatest fuzzy simulation between $\mS$ and $\mS'$ w.r.t.\ the G\"odel semantics.}
\BlankLine
\label{step: algCompFS-NFLTS 1} construct the \FLGs $G$ and $G'$ that correspond to~$\mS$ and~$\mS'$, respectively\; 
\label{step: algCompFS-NFLTS 2} treating these \FLGs as interpretations in description logic (in the usual way), apply the algorithm {\em ComputeFuzzySimulation} given in~\cite{TFS2020} to compute the greatest fuzzy simulation between $G$ and $G'$ (in \fALC) under the G\"odel semantics\;
\label{step: algCompFS-NFLTS 3} \Return $Z|_{S \times S'}$\;
\end{algorithm}

\medskip

\begin{theorem}
Algorithm~\ref{algCompCS-NFLTS} (resp.\ \ref{algCompFS-NFLTS}) is a correct algorithm for computing the greatest crisp (resp.\ fuzzy) simulation between finite \NFLTSs $\mS = \tuple{S, \SA, \delta, \sS, L}$ and $\mS' = \tuple{S', \SA, \delta', \sS, L'}$. Its time complexity is of the order $O((m+n)n)$, where $m = \size(\delta) + \size(\delta')$ and $n = |S| + |S'| + |\deltatt| + |\deltatt'|$, treating $|\SA|$ and $|\sS|$ as constants.
\end{theorem}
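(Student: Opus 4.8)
The plan is to reduce both correctness and the complexity bound to the corresponding facts about the underlying algorithms on \FLGs, in the same spirit as the proofs of the theorems about Algorithms~\ref{algCompCPfNFTS} and~\ref{algCompFPfNFTS}. Let $G$ and $G'$ be the \FLGs corresponding to $\mS$ and $\mS'$ (Definition~\ref{def: JHFSO 2}); by that definition together with Definition~\ref{def: JHFSO 1}, their vertex sets are the disjoint unions $V = S \cup \deltatt$ and $V' = S' \cup \deltatt'$, and their edge functions $E, E'$ satisfy $|\support(E)| = \size(\delta)$ and $|\support(E')| = \size(\delta')$. For correctness, I would first recall that the greatest crisp simulation (resp.\ the greatest fuzzy simulation under the G\"odel semantics) $Z$ between $G$ and $G'$ exists --- this is exactly what \emph{ComputeSimulationEfficiently}~\cite{DBLP:journals/jifs/Nguyen22} (resp.\ \emph{ComputeFuzzySimulation}~\cite{TFS2020}) computes; equivalently, the union (resp.\ pointwise supremum) of any family of crisp (resp.\ fuzzy) simulations between $G$ and $G'$ is again a simulation between $G$ and $G'$, a routine verification from Definition~\ref{def: HGDKA} (resp.\ Definition~\ref{def: JRKAF}), the fuzzy case relying on continuity of the G\"odel t-norm. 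By Definition~\ref{def: JHFKX}, a relation $R$ is a crisp simulation between $\mS$ and $\mS'$ iff $R = Z' \cap (S \times S')$ for some crisp simulation $Z'$ between $G$ and $G'$; since $Z' \subseteq Z$ implies $Z' \cap (S \times S') \subseteq Z \cap (S \times S')$, the relation $Z \cap (S \times S')$ returned by step~\ref{step: algCompCS-NFLTS 3} is the greatest crisp simulation between $\mS$ and $\mS'$. The fuzzy case is identical, with intersection with $S \times S'$ replaced by restriction to $S \times S'$ and $\subseteq$ replaced by the pointwise order $\leq$. The only ingredient here that is not purely formal is the faithfulness of ``treating $G, G'$ as \FLTSs'' (resp.\ ``as interpretations in description logic''): one must check that the simulation notion of Definition~\ref{def: HGDKA} (resp.\ Definition~\ref{def: JRKAF}) is the one processed by the cited algorithm, under the standard dictionary ``vertices $\leftrightarrow$ states/individuals, $\SE \leftrightarrow$ actions/roles, $\SV \leftrightarrow$ propositions/concept names, $L \leftrightarrow$ valuation''.

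For the complexity, the bookkeeping rests on the identities $|V| + |V'| = |S| + |S'| + |\deltatt| + |\deltatt'| = n$ and $|\support(E)| + |\support(E')| = \size(\delta) + \size(\delta') = m$, together with $|\SE| = |\SA| + 1$ and $|\SV| = |\sS| + 1$ being treated as constants. Step~\ref{step: algCompCS-NFLTS 1} costs $O(|S| + \size(\delta))$ for $G$ and $O(|S'| + \size(\delta'))$ for $G'$ by Remark~\ref{remark: JHFJS} (the extra state labels contribute only $O(|S|)$ and $O(|S'|)$ since $|\sS|$ is constant), hence $O(m + n)$ altogether. Step~\ref{step: algCompCS-NFLTS 2} runs the cited simulation algorithm on $G$ and $G'$; with constant alphabets, its running time --- expressed through $|V|, |V'|, |\support(E)|, |\support(E')|$ and then specialised via the identities above --- is of the order $O((m+n)n)$, the $O(n^2)$ summand arising from initialising the candidate relation to all label-compatible vertex pairs and the $O(mn)$ summand from the simulation-refinement loop. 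Step~\ref{step: algCompCS-NFLTS 3} only filters (resp.\ restricts) $Z$, which has at most $|V| \cdot |V'| = O(n^2)$ entries, costing $O(n^2)$. Summing the three contributions gives $O((m+n)n)$.

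I expect the main obstacle to be precisely this faithfulness claim, together with importing the exact complexity statements: one must confirm that Definitions~\ref{def: HGDKA} and~\ref{def: JRKAF} are the simulation notions for which~\cite{DBLP:journals/jifs/Nguyen22} and~\cite{TFS2020} establish correctness, and that --- after substituting the parameters of $G$ and $G'$ and treating $|\SA|, |\sS|$ as constants --- their complexity bounds indeed collapse to $O((m+n)n)$ with no residual logarithmic or alphabet-dependent factor. The remaining ingredients --- existence of the greatest simulation between $G$ and $G'$, monotonicity of the maps $Z \mapsto Z \cap (S \times S')$ and $Z \mapsto Z|_{S \times S'}$, and the parameter identities --- are routine.
\myend
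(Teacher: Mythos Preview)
Your proposal is correct and follows essentially the same approach as the paper: correctness is derived from Definition~\ref{def: JHFKX} together with the correctness of the cited algorithms on \FLGs, and the complexity is obtained by summing $O(m+n)$ for step~\ref{step: algCompCS-NFLTS 1} (Remark~\ref{remark: JHFJS}), $O((m+n)n)$ for step~\ref{step: algCompCS-NFLTS 2} (the cited complexity results), and $O(n^2)$ for step~\ref{step: algCompCS-NFLTS 3}. The paper's proof is considerably terser---it does not spell out the monotonicity argument for why $Z \cap (S \times S')$ is greatest, nor the faithfulness of the \FLG-to-\FLTS/interpretation translation---but these are exactly the points you flag as routine or as needing a dictionary check, so there is no substantive divergence.
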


\begin{proof}
The correctness of Algorithm~\ref{algCompCS-NFLTS} (resp.\ \ref{algCompFS-NFLTS}) directly follows from Definition~\ref{def: JHFKX} and the correctness of the algorithm {\em ComputeSimulationEfficiently} given in~\cite{DBLP:journals/jifs/Nguyen22} (resp.\ {\em ComputeFuzzySimulation} given in~\cite{TFS2020}). 
By Remark~\ref{remark: JHFJS}, the step~\ref{step: algCompCS-NFLTS 1} can be done in time of the order $O(m+n)$. 
By~\cite[Theorem~3.5]{DBLP:journals/jifs/Nguyen22} (resp.\ \cite[Theorem~20]{TFS2020}), the step~\ref{step: algCompCS-NFLTS 2} runs in time of the order $O((m+n)n)$. 
The step~\ref{step: algCompCS-NFLTS 3} runs in time of the order $O(n^2)$. 
Hence, Algorithm~\ref{algCompCS-NFLTS} (resp.\ \ref{algCompFS-NFLTS}) runs in time of the order $O((m+n)n)$. 
\myend
\end{proof}

%===============================================================================

\section{Conclusions}
\label{section: conc}

We have provided efficient algorithms for computing the partition corresponding to the greatest crisp bisimulation of a finite \NFLTS $\mS = \tuple{S, \SA, \delta, \sS, L}$, as well as the compact fuzzy partition corresponding to the greatest fuzzy bisimulation of $\mS$ under the G\"odel semantics. Their time complexities are of the order $O((\size(\delta) \log{l} + |S|) \log{(|S| + |\deltatt|)})$, where $l$ is the number of fuzzy values used in~$\mS$ plus~2. If needed, one can explicitly convert  a crisp (resp.\ compact fuzzy) partition to the corresponding crisp (resp.\ fuzzy) equivalence relation in time of the order $O(|S|^2)$. However, the conversion can be avoided by exploiting the algorithm of finding the lowest common ancestor by Harel and Tarjan~\cite{DBLP:journals/siamcomp/HarelT84}. 
Our algorithms when used for computing the greatest crisp/fuzzy bisimulation of a finite \NFTS significantly outperform the previously known algorithms \cite{DBLP:journals/fss/WuCBD18,DBLP:journals/tfs/QiaoZF23} for the task, like comparing $O(|S| \cdot |\delta| \cdot \log^2{|\delta|})$ with $O(|S|^4 \cdot |\delta|^2)$ and $O(|S|^4 \cdot |\delta|^2 \cdot l)$. 

We have also provided efficient algorithms for computing the greatest crisp/fuzzy simulation between two finite \NFLTSs. 

%===============================================================================

\biboptions{sort&compress}
\bibliography{BSfDL}

\begin{thebibliography}{10}

\bibitem{DBLP:journals/isci/BaetsCK98}
B.~De Baets, G.~De Cooman, and E.E. Kerre.
\newblock The construction of possibility measures from samples of
  t-semi-partitions.
\newblock {\em Inf. Sci.}, 106(1-2):3--24, 1998.

\bibitem{BRV2001}
P.~Blackburn, M.~de~Rijke, and Y.~Venema.
\newblock {\em Modal Logic}.
\newblock Number~53 in Cambridge Tracts in Theoretical Computer Science.
  Cambridge University Press, 2001.

\bibitem{conf/tase/BuWC17}
T.{-}M. Bu, H.~Wu, and Y.~Chen.
\newblock Computing behavioural distance for fuzzy transition systems.
\newblock In {\em Proceedings of {TASE} 2017}, pages 1--7. {IEEE} Computer
  Society, 2017.

\bibitem{Belohlavek2002}
R.~B\v{e}lohl{\'a}vek.
\newblock {\em Fuzzy Relational Systems: Foundations and Principles}.
\newblock Kluwer, 2002.

\bibitem{CaoCK11}
Y.~Cao, G.~Chen, and E.E. Kerre.
\newblock Bisimulations for fuzzy-transition systems.
\newblock {\em {IEEE} Trans. Fuzzy Systems}, 19(3):540--552, 2011.

\bibitem{DBLP:books/sp/20/CaoECP20}
Y.~Cao, Y.~Ezawa, G.~Chen, and H.~Pan.
\newblock Modeling and specification of nondeterministic fuzzy discrete-event
  systems.
\newblock In {\em Decision Making under Constraints}, volume 276, pages 45--58.
  Springer, 2020.

\bibitem{CaoSWC13}
Y.~Cao, S.X. Sun, H.~Wang, and G.~Chen.
\newblock A behavioral distance for fuzzy-transition systems.
\newblock {\em {IEEE} Trans. Fuzzy Systems}, 21(4):735--747, 2013.

\bibitem{tcs/Chen0C18}
T.~Chen, T.~Han, and Y.~Cao.
\newblock Polynomial-time algorithms for computing distances of fuzzy
  transition systems.
\newblock {\em Theor. Comput. Sci.}, 727:24--36, 2018.

\bibitem{DBLP:journals/fss/CiricIB07}
M.~\'Ciri\'c, J.~Ignjatovi\'c, and S.~Bogdanovi\'c.
\newblock Fuzzy equivalence relations and their equivalence classes.
\newblock {\em Fuzzy Sets and Systems}, 158(12):1295--1313, 2007.

\bibitem{Hajek1998}
P.~H{\'a}jek.
\newblock {\em Metamathematics of Fuzzy Logics}.
\newblock Kluwer Academic Publishers, 1998.

\bibitem{DBLP:journals/siamcomp/HarelT84}
D.~Harel and R.E. Tarjan.
\newblock Fast algorithms for finding nearest common ancestors.
\newblock {\em {SIAM} J. Comput.}, 13(2):338--355, 1984.

\bibitem{DBLP:journals/kbs/IgnjatovicCS13}
J.~Ignjatovic, M.~Ciric, and V.~Simovic.
\newblock Fuzzy relation equations and subsystems of fuzzy transition systems.
\newblock {\em Knowl. Based Syst.}, 38:48--61, 2013.

\bibitem{DBLP:journals/jifs/Nguyen22}
L.~A. Nguyen.
\newblock Computing crisp simulations for fuzzy labeled transition systems.
\newblock {\em J. Intell. Fuzzy Syst.}, 42(4):3067--3078, 2022.

\bibitem{FBSML}
L.~A. Nguyen.
\newblock Logical characterizations of fuzzy bisimulations in fuzzy modal
  logics over residuated lattices.
\newblock {\em Fuzzy Sets Syst.}, 431:70--93, 2022.

\bibitem{DBLP:journals/isci/Nguyen23}
L.~A. Nguyen.
\newblock Computing the fuzzy partition corresponding to the greatest fuzzy
  auto-bisimulation of a fuzzy graph-based structure under the {G{\"o}del}
  semantics.
\newblock {\em Inf. Sci.}, 630:482--506, 2023.

\bibitem{NFTS-impl}
L.~A. Nguyen.
\newblock An implementation of the algorithms provided in {Section~4} of the
  current paper.
\newblock Available at \url{mimuw.edu.pl/~nguyen/NFTS}, 2024.

\bibitem{DBLP:conf/fuzzIEEE/NguyenN21}
L.~A. Nguyen and N.~T. Nguyen.
\newblock Characterizing crisp simulations and crisp directed simulations
  between fuzzy labeled transition systems by using fuzzy modal logics.
\newblock In {\em Proc.~of {FUZZ-IEEE}'2021}, pages 1--7. {IEEE}, 2021.

\bibitem{DBLP:journals/cas/NguyenN22}
L.~A. Nguyen and N.~T. Nguyen.
\newblock Logical characterizations of fuzzy simulations.
\newblock {\em Cybern. Syst.}, 53(5):482--499, 2022.

\bibitem{DBLP:journals/tfs/NguyenN23}
L.~A. Nguyen and N.~T. Nguyen.
\newblock Logical characterizations of crisp bisimulations in fuzzy description
  logics.
\newblock {\em {IEEE} Trans. Fuzzy Syst.}, 31(4):1294--1304, 2023.

\bibitem{TFS2020}
L.~A. {Nguyen} and D.~X. {Tran}.
\newblock Computing fuzzy bisimulations for fuzzy structures under the
  {G\"{o}del} semantics.
\newblock {\em {IEEE} Trans. Fuzzy Syst.}, 29(7):1715--1724, 2021.

\bibitem{DBLP:journals/ijar/NguyenT24}
L.~A. Nguyen and D.~X. Tran.
\newblock Computing crisp bisimulations for fuzzy structures.
\newblock {\em Int. J. Approx. Reason.}, 166:109121, 2024.

\bibitem{OVCHINNIKOV1991107}
S.~Ovchinnikov.
\newblock Similarity relations, fuzzy partitions, and fuzzy orderings.
\newblock {\em Fuzzy Sets and Systems}, 40(1):107--126, 1991.

\bibitem{DBLP:journals/ijar/PanC0C14}
H.~Pan, Y.~Cao, M.~Zhang, and Y.~Chen.
\newblock Simulation for lattice-valued doubly labeled transition systems.
\newblock {\em Int. J. Approx. Reason.}, 55(3):797--811, 2014.

\bibitem{DBLP:journals/ijar/PanLC15}
H.~Pan, Y.~Li, and Y.~Cao.
\newblock Lattice-valued simulations for quantitative transition systems.
\newblock {\em Int. J. Approx. Reason.}, 56:28--42, 2015.

\bibitem{jfi/QiaoF024}
S.~Qiao, J.~Feng, and P.~Zhu.
\newblock Distribution-based limited fuzzy bisimulations for nondeterministic
  fuzzy transition systems.
\newblock {\em J. Frankl. Inst.}, 361(1):135--149, 2024.

\bibitem{DBLP:journals/tfs/QiaoZF23}
S.~Qiao, P.~Zhu, and J.~Feng.
\newblock Fuzzy bisimulations for nondeterministic fuzzy transition systems.
\newblock {\em {IEEE} Trans. Fuzzy Syst.}, 31(7):2450--2463, 2023.

\bibitem{fss/QiaoZP23}
S.~Qiao, P.~Zhu, and W.~Pedrycz.
\newblock Approximate bisimulations for fuzzy-transition systems.
\newblock {\em Fuzzy Sets Syst.}, 472:108533, 2023.

\bibitem{DBLP:conf/ismvl/Schmechel95}
N.~Schmechel.
\newblock On lattice-isomorphism between fuzzy equivalence relations and fuzzy
  partitions.
\newblock In {\em Proceedings of {ISMVL}'1995}, pages 146--151. {IEEE} Computer
  Society, 1995.

\bibitem{DBLP:journals/ijar/WuCHC18}
H.~Wu, T.~Chen, T.~Han, and Y.~Chen.
\newblock Bisimulations for fuzzy transition systems revisited.
\newblock {\em Int. J. Approx. Reason.}, 99:1--11, 2018.

\bibitem{DBLP:journals/fss/WuCBD18}
H.~Wu, Y.~Chen, T.{-}M. Bu, and Y.~Deng.
\newblock Algorithmic and logical characterizations of bisimulations for
  non-deterministic fuzzy transition systems.
\newblock {\em Fuzzy Sets and Systems}, 333:106--123, 2018.

\bibitem{fss/WuD16}
H.~Wu and Y.~Deng.
\newblock Logical characterizations of simulation and bisimulation for fuzzy
  transition systems.
\newblock {\em Fuzzy Sets and Systems}, 301:19--36, 2016.

\bibitem{tfs/WuD18}
H.~Wu and Y.~Deng.
\newblock Distribution-based behavioral distance for nondeterministic fuzzy
  transition systems.
\newblock {\em {IEEE} Trans. Fuzzy Syst.}, 26(2):416--429, 2018.

\end{thebibliography}
\bibliographystyle{plain}

%-------------------------------------------------------------------------------

\end{document}